\newtheorem{theorem}{Theorem}[section]
\newtheorem{lemma}[theorem]{Lemma}
\newtheorem{proposition}[theorem]{Proposition}
\newtheorem{definition}[theorem]{Definition}
\newtheoremstyle{myrem}
 {7pt}
 {7pt}
 {}
 { }
 {\bf}
 {.}
 { }
 {}
 \theoremstyle{myrem}
 \newtheorem*{remark}{Remark}
 \newtheoremstyle{myrem}
 {7pt}
 {7pt}
 {}
 { }
 {\bf}
 {.}
 { }
 {}
 \theoremstyle{myrem}
 \newtheorem*{example}{Example}
\title{A Pfaffian formula for the monomer-dimer model on surface graphs}
\author{Anh Minh Pham}
\address{Université de Genève, Section de mathématiques, 2 rue du Lièvre, 1211 Genève, Switzerland}
\email{AnhMinh.Pham@unige.ch}
\subjclass[2010]{Primary 82B20; Secondary 05C70, 05C10, 57M15} 
\keywords{Monomer-dimer model, partition function, surface graph, Pfaffian.}
\begin{document}
\begin{abstract}
We consider the monomer-dimer model on weighted graphs embedded in surfaces with boundary, with the restriction that only monomers located on the boundary are allowed. We give a Pfaffian formula for the corresponding partition function, which generalises the one obtained by Giuliani, Jauslin and Lieb for graphs embedded in the disk \cite{Lieb16}. Our proof is based on an extension of a bijective method mentioned in \cite{Lieb16}, together with the Pfaffian formula for the dimer partition function of Cimasoni-Reshetikhin \cite{CimRes07}.
\end{abstract}
\maketitle
\section{Introduction}
The monomer-dimer model is one of the important models in both statistical physics and theoretical computer science. As a classical lattice model, it was first introduced to describe the absorption of a mixture of unequal-size molecules on crystal surfaces \cite{Fowler37}. A brief history of the study of this model in statistical mechanics, and some fundamental results can be found in \cite{Hielmann-Lieb70,Heilmann-Lieb72}. Moreover, the impact and the interest of the monomer-dimer model goes beyond physics: indeed the monomer-dimer problem represents a prototypical problem in computational complexity theory \cite{Garey02}.

This model can be defined as follows. Let $G$ be a finite graph with vertex set $V(G)$ and edge set $E(G)$; a $\emph{monomer-dimer covering}$, or shortly, an MD covering of $G$ is a pair $\tau=(\tau_D,\tau_M)\in E(G) \times V(G)$ so that each vertex of $G$ is covered by exactly one element of $\tau_D$ (which we shall call a $\emph{dimer}$) or one element of $\tau_M$ (which we shall call a $\emph{monomer}$). Let us denote by $\mathcal{MD}(G)$ the set of all MD coverings of $G$. If monomers are not allowed to appear, that is, if $\tau_M=\emptyset$, then we simply say that $\tau=\tau_D$ is a $\emph{dimer covering}$ (aka $\emph{perfect matching}$) of $G$. We denote by $\mathcal{D}(G)$ the set of all dimer coverings of $G$. If $G$ is endowed with an edge weight system $x=(x_e)_{e\in E(G)}$ and a vertex weight system $y=(y_v)_{v\in V(G)}$, then the monomer-dimer partition function (or simply the MD partition function) of the weighted graph $(G,x,y)$ is defined by $$Z_{\mathcal{MD}}(G):=Z_{\mathcal{MD}}(G,x,y)=\sum_{\tau\in \mathcal{MD}(G)}\prod_{e\in \tau_D}x_e\prod_{v\in \tau_M}y_v.$$ Note that if we set all the vertex weights $y=(y_v)_{v\in V(G)}$ equal to 0, which means that monomers are not allowed to appear, then we get the dimer partition function of $G$.

A classical problem associated to the monomer-dimer model on a given weighted graph $G$ is to compute its partition function $Z_{\mathcal{MD}}(G)$. Unfortunately, it turns out that this computation, even for planar graphs, is intractable \cite{Jer87}. More precisely, it is proved in \cite{Jer87} that computing the number of matchings (and so computing the number of MD coverings) for planar graphs is ``$\#$P complete''. Nevertheless, there are some particular cases in which we can compute the MD partition function efficiently. Let us now discuss these cases.

If monomers are not allowed to appear, or if they are fixed, computing the monomer-dimer partition function boils down to computing the dimer partition function of some resulting graph. For the latter, one can use a $\emph{Pfaffian formula}$, which expresses the dimer partition function of any surface graph (i.e. graph embedded in a surface) as a linear sum of the Pfaffian of some skew-adjacency matrices associated to the graph. In fact this formula was first achieved by Fisher \cite{Fis61} and Kasteleyn \cite{Kas61} for the square lattice, and then by Kasteleyn for every planar graph \cite{Kas63}. The result was extended later to general surface graphs by Galluccio-Loebl \cite{Gal99} and Tesler \cite{Tes2000} independently, and then by Cimasoni-Reshetikhin \cite{CimRes07,CimRes08, Cim09}.

Furthermore, if monomers are allowed to appear, and if one restricts their locations, similar Pfaffian formulas for MD partition functions can be obtained in some cases \cite{Tem73, Wu06, Priezzhev08, WuTzeng11, Lieb16}. To the best of our knowledge, the most general Pfaffian formula for the MD partition function was obtained only recently in the paper \cite{Lieb16} of Giuliani, Jauslin and Lieb where the authors consider graphs embedded in the disk with monomers restricted on the boundary of this disk. There are two methods to prove their Pfaffian formula. One method is to define a suitable matrix associated to a well-chosen orientation on the edges of the graph together with a good labelling of its vertices so that the Pfaffian of this matrix is equal to the MD partition function (see \cite{Lieb16} for more details). Another method is based on a one-to-two mapping from the set of MD coverings of the original graph to the set of dimer coverings of an auxiliary planar graph, and the application of the Pfaffian formula for the dimer partition function to this auxiliary graph (see \cite[Appendix E]{Lieb16}). 

The aim of the present article is to generalise the Pfaffian formula obtained by Giuliani, Jauslin and Lieb to graphs embedded in surfaces of arbitrary topology with any number of boundary components, where monomers are still restricted on the boundary of the surface. Roughly speaking, if $G$ is embedded in an orientable surface of genus $g$ with $b$ boundary components, the MD partition function of $G$ (with monomers located only on the boundary) is given by $$Z_{\mathcal{MD}}(G)=\frac{1}{2^g}\bigg|\sum_{K}\pm \text{Pf} (M^K(G)) \bigg|.$$ In this formula, the sum is over $2^{2g+b-1}$ well-chosen orientations on $G$, and $M^K(G)$ is a modified adjacency matrix of $G$ with respect to the orientation $K$ (see Definition \ref{def: modified} below). The more precise statement of our formula can be found in Theorem \ref{theo: main}. As a consequence, if monomers are not allowed to appear, we get back the general Pfaffian formula for the dimer partition function obtained in \cite{Cim09}. Also, if $g=0$ and $b=1$ we get back the Pfaffian formula for the MD partition function given in \cite{Lieb16}.

Before concluding this introduction, let us give the idea of the proof of our formula. In short, by extending the bijective method described in \cite[Appendix E]{Lieb16} to graphs embedded in surfaces with boundary, we can define an auxiliary surface graph $G_{\beta}$ for each $\beta=(\beta_1,\dots,\beta_b)\in \mathbb{Z}_2^b$  with $\sum_{k=1}^b\beta_k$ even with the following property:  the dimer partition function of $G_{\beta}$ can be related to a partial MD partition function of $G$ with the parity of the number of monomers on each boundary component given by $\beta$. The former quantity then can be computed by using the Pfaffian formula for the dimer partition function given in \cite{Cim09}. It follows that the MD partition function of $G$ can be given by a linear sum of the Pfaffian of skew-adjacency matrices associated to $G_{\beta}$. Furthermore, one can show that the latter can be related to the Pfaffian of modified adjacency matrices of $G$ mentioned above, and hence we end up with the formula as stated. 

In conclusion, it should be mentioned that using the same method as described above together with the Pfaffian formula for the dimer partition function of graphs embedded in (possibly) non-orientable surfaces \cite{Cim09}, one can obtain a similar formula for the MD partition function of graphs embedded in (possibly) non-orientable surfaces with boundary. The proof of this formula is almost the same as in the orientable case, and therefore we only consider the latter in the present work. Last but not least, for the dimer partition function of surface graphs, the number of terms in the Pfaffian formula is precisely the order of the first homology group over $\mathbb{Z}_2$ of the surface. By our formula, this is also the case when we consider the monomer-dimer model.

The article is organised as follows. In Section \ref{sec: statement} we define modified adjacency matrices associated to graphs embedded in surfaces with boundary. We also describe particular orientations on edges of graphs together with a specific labelling of vertices. Finally we state our general Pfaffian formula for the monomer-dimer partition function (Theorem \ref{theo: main}). Section \ref{sec: proof} is devoted to review the dimer model on surface graphs (Section \ref{subsec: dimer}), and then to prove our formula (Section \ref{subsec: main proof}).
\subsection*{Acknowledgments} This work was supported by a grant of the Swiss National Science Foundation. The author would also like to thank his advisor David Cimasoni for helpful comments and discussions.
\section{Definitions and statement of the main result}\label{sec: statement}
In this section we will fix the setting for the rest of the paper, and introduce some basic definitions and notations that are necessary to state our result. We then give our Pfaffian formula for the MD partition function together with an example at the end of the section.
\subsection{Setting} Throughout this article, we shall assume that $G$ is a finite connected graph endowed with an edge weight system $x=(x_e)_{e\in E(G)}$ and a vertex weight system $y=(y_v)_{v\in V(G)}$. Note that $G$ is allowed to have multiple edges, while loops can always be removed since they do not play any role in our model. We also assume that $G$ is embedded in an orientable surface $\Sigma$ of genus $g$ with boundary $\partial \Sigma$ consisting of $b\geq 1$ components. Let us define the boundary $\partial G$ of $G$ as the subgraph of $G$ consisting of vertices and edges that can be connected to $\partial \Sigma$ by a path without crossing any other edge of $G$ (cf. \cite{Lieb16}).

\begin{figure}[b]
\centering
  \includegraphics[height=115pt]{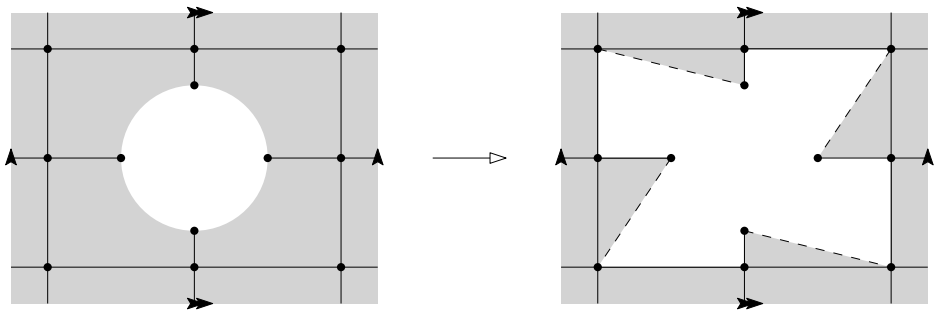}
  \caption{Add suitable (dashed) edges and deform $\partial \Sigma$ to get a boundary circuit.} \label{fig: boundarycircuit}
\end{figure}
Following \cite{Lieb16}, one can make the following assumptions without any loss of generality. First of all, by adding suitable edges of weights 0 and by deforming the boundary of $\Sigma$ if needed, we can assume that $\partial G$ consists of $b$ circuits $B_1,\dots,B_b$ which coincide with the $b$ components $C_1,\dots,C_b$ of $\partial\Sigma$ respectively (see Figure \ref{fig: boundarycircuit} when $g=1$, $b=1$). This means that one can travel along each boundary component of $G$ (as well as that of $\Sigma$) visiting each vertex in this component exactly once. We refer the reader to \cite[Section 4.1]{Lieb16} for an explicit algorithm to construct a boundary circuit for any planar graph, which is also valid for any graph embedded in surfaces with boundary. Note that by this first assumption, we can identify $\partial G$ with $\partial\Sigma$, and hence the MD partition function with monomers restricted to $\partial\Sigma$ coincides with the ``boundary MD partition function'' defined in \cite{Lieb16} for $g=0$ and $b=1$. For this reason we shall also call the former one (which we want to compute) the boundary MD partition function, and still denote it by $Z_{\mathcal{MD}}(G)$. Let us now continue with the second assumption. Denoting by $N_k$ the number of vertices of $G$ on $B_k$ for $1\leq k\leq b$, one can assume that $N_k$ is even for every $k$. Indeed, if $N_k$ is odd for some $k$, let $G'$ be the graph obtained by transforming $G$ as in Figure \ref{fig: evenboundaryvertex}: add an edge of weight 1 with one endpoint on $B_k$ and the other one in the interior of $\Sigma$. This added edge must be occupied by every MD covering of $G'$ whose monomers are restricted to $\partial G'$. Then one can verify easily that $Z_{\mathcal{MD}}(G)=Z_{\mathcal{MD}}(G')$, and that $G'$ has an even number of vertices on its newly-created boundary circuit $B_k'$. Finally, we can assume that $|V(G)|$ is even: otherwise, let $G''$ be the surface graph obtained by adding a vertex of weight 1 to an arbitrary face of $G$, and by connecting this vertex to all the vertices on the boundary of this face by edges of weights 0; then it is clear that $Z_{\mathcal{MD}}(G)=Z_{\mathcal{MD}}(G'')$, and that $|V(G'')|$ is even. To sum up, from now on $G$ will be assumed to have an even total number of vertices, as well as an even number of vertices on each of its boundary circuit.

For further purpose, let us denote by $\overline{\Sigma}$ the closed surface of genus $g$ obtained from $\Sigma$ by gluing a topological 2-disc $D_k$ along each boundary component $C_k$ for every $1\leq k\leq b$.
\begin{figure}[t]
\centering
  \includegraphics[height=130pt]{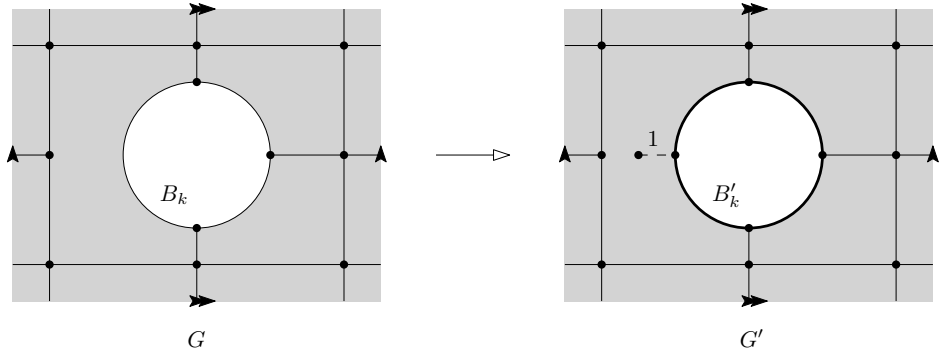}
  \caption{Add an edge (in dashed line) to the boundary circuit $B_k$ of $G$ to get the graph $G'$ with the new boundary circuit $B_k'$ (in heavily bolded arcs).} \label{fig: evenboundaryvertex}
\end{figure}
\subsection{Modified adjacency matrices}
We now define modified adjacency matrices of the graph $G$ with respect to some labelling of its vertices and some orientation $K$ on its edges. These matrices are actually inspired by the work of Giuliani, Jauslin and Lieb (cf. \cite[Theorem 1.1]{Lieb16}), however we slightly adjust their definition, and extend it to the general case when $G$ has more than one boundary circuit. More precisely, our matrices, which are parametrised by elements of $\mathbb{Z}_2^b$, are defined as follows.
\begin{definition} \label{def: modified}
 Let $(G,x,y)$ be a weighted graph and $K$ an orientation on its edges. Given $\beta=(\beta_1,\dots,\beta_b)\in \mathbb{Z}_2^b$ with $\sum_{k=1}^b \beta_k$ even, add an isolated vertex $v_k$ of weight 1 for each $k$ such that $\beta_k=1$. Labelling the vertices of $G$ together with all such $v_k$'s by a number set $I\subset \mathbb{N}$, we define the $\emph{modified adjacency matrix}$ $M^K_{\beta}(G)=(m^{\beta}_{ij})_{i,j\in I}$ of $G$ as the skew-symmetric matrix whose entries are given by $$m^{\beta}_{ij}= \sum_{e}\epsilon^K_{ij}(e)x_e+(-1)^{i+j}d_{ij}y_iy_j \quad \text{for $i<j$},$$ where the sum is taken over all the edge $e$ between two vertices $i,j$ while $\epsilon^K_{ij}(e)$ is equal to 1 if $e$ is oriented by $K$ from $i$ to $j$, and equal to -1 otherwise. Moreover we define  $$d_{ij}= \left\{  \begin{array}{ll}
        1& \text{if $i,j$ belong to a same boundary component};\\
        0& \text{if not},
\end{array} \right.$$ where the ``virtual vertex'' $v_k$ is always considered to be on the boundary component $B_k$.
 \end{definition}
 \begin{figure}[t]
\centering
  \includegraphics[height=120pt]{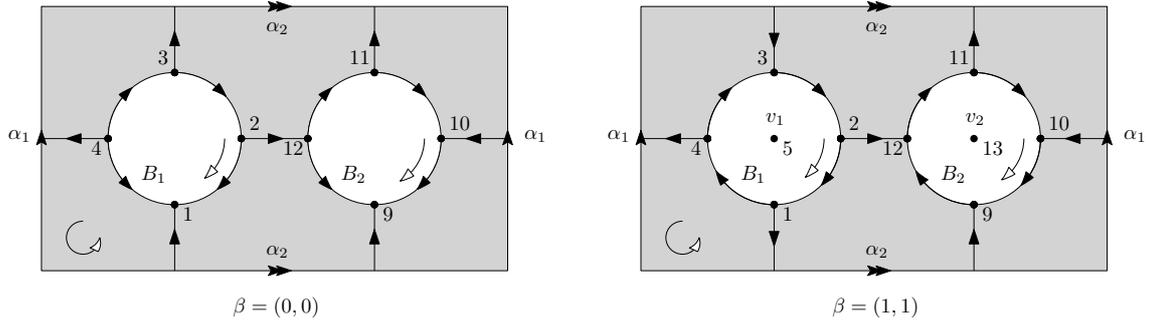}
  \caption{Labellings and orientations of interest for $b=2$. $\gamma_1=(1,4,3)$ and $\gamma_2=(4,3,2,12,11,10)$ are oriented cycles having $\alpha_1$ and $\alpha_2$ on their immediate left.} \label{fig: orientationandlabelling}
\end{figure}
Note that by definition, $M^K_{\beta}(G)$ does not only depend on the parameter $\beta$ but also on the orientation $K$ and on the way we label the vertices of $G$ (and the $v_k$'s). We now describe a specific labelling and an orientation of interest.
 
Let us begin with a specific labelling of the vertices of $G$. Recall that $G$ is embedded in the orientable surface $\Sigma$ with boundary $\partial\Sigma$ consisting of $b$ circuits $B_k$'s. Then endowing $\Sigma$ with an orientation (pictured counterclockwise) induces a natural orientation (pictured clockwise) on each $B_k$ (cf. Figure \ref{fig: orientationandlabelling}). For each parameter $\beta$, we label the vertices on each $B_k$ (together with $v_k$ if $\beta_k=1$) by increasing consecutive numbers so that we see these vertices in that order when we travel along $B_k$ with its inverse orientation (and then $v_k$ if $\beta_k=1$). This labelling is illustrated in Figure \ref{fig: orientationandlabelling}. Note that when travelling along $B_k$, the starting vertex is not important. The vertices of $G$ that do not belong to any $B_k$ then can be labelled in an arbitrary way.
 
We now continue by describing a specific orientation $K$ of interest. Recall that the orientable surface $\Sigma$ is endowed with a pictured-counterclockwise orientation which also induces an orientation on each face $f$ of $G$, as well as its boundary $\partial f$. Firstly, following \cite{Kas61, Kas63, CimRes07} we require $K$ to be $\emph{Kasteleyn}$, that is, for each face $f$ of $G$, the number $n^K(\partial f)$ of edges on its boundary where the orientation of $\partial f$ is different from $K$ is odd. Secondly, for each $\beta=(\beta_1,\dots,\beta_b)\in \mathbb{Z}_2^b$ with $\sum_{k=1}^b \beta_k$ even, we want $K$ to be specific on $B_k$ for $1\leq k\leq b$ as follows: if $\beta_k=0$, then $K$ is from big-labelled vertices to small-labelled vertices; if $\beta_k=1$ then $K$ coincides with the natural orientation induced on $B_k$ (see Figure \ref{fig: orientationandlabelling}). Let us denote by $\mathcal{O}_{\beta}$ the set of all orientations which are Kasteleyn and satisfy the above condition. For  $\beta=(\beta_1,\dots,\beta_b)\in \mathbb{Z}_2^b$ with $\sum_{k=1}^b\beta_k$ even, the set $\mathcal{O}_{\beta}$ (which will be shown to be nonempty by Lemma \ref{lem: exist}) consists of the orientations we are interested in.
\subsection{Main result}
To state our Pfaffian formula, we need to choose a specific orientation in each $\mathcal{O}_{\beta}$, that can be described as follows. Let $\{\alpha_i\}_{1\leq i\leq 2g}$ be a set of closed curves that are transverse to $G$ and avoid $\partial G$ so that their homology classes $\{[\alpha_i]\}_{1\leq i\leq 2g}$ form a basis of $H_1(\overline{\Sigma};\mathbb{Z}_2)$, the first homology group over $\mathbb{Z}_2$ of $\overline{\Sigma}$. For each $1\leq i\leq 2g$, let $\gamma_i$ be an oriented cycle of $G$ having $\alpha_i$ on its immediate left, and choose $K_{\beta}\in \mathcal{O}_{\beta}$ so that $n^{K_{\beta}}(\gamma_i)$ is odd for every $1\leq i\leq 2g$ (see Figure \ref{fig: orientationandlabelling} for example).

However, in general it is not obvious that there exists $K_{\beta}$ as required for each $\beta=(\beta_1,\dots,\beta_b)\in \mathbb{Z}_2^b$ with $\sum_{k=1}^b\beta_k$ even. Therefore, the proof of this fact will be postponed until the next section (see Lemma \ref{lem: exist}). With this specific $K_{\beta}$ and with each $\epsilon=(\epsilon_1,\dots,\epsilon_{2g})\in \mathbb{Z}_2^{2g}$, let us denote by $K_{\beta}^{\epsilon}$ the orientation obtained by inverting $K_{\beta}$ on every edge $e$ each time $e$ crosses $\alpha_i$ with $\epsilon_i=1$. Then one can state our formula as follows.
\begin{theorem} \label{theo: main}
Let $(G,x,y)$ be a weighted graph embedded in an orientable surface $\Sigma$ of genus $g$ with $b$ boundary components. Then the boundary monomer-dimer partition function of $G$ is given by $$Z_{\mathcal{MD}}(G)=\frac{1}{2^g} \sum_{\substack{\beta\in \mathbb{Z}_2^b\\ \sum_{k}\beta_k\, \text{even}}}\bigg| \sum_{\epsilon\in\mathbb{Z}_2^{2g}}(-1)^{\sum_{i<j}\epsilon_i\epsilon_j\alpha_i\cdot\alpha_j}\text{Pf}(M^{K_{\beta}^{\epsilon}}_{\beta}(G))\bigg|.$$ In this formula the first sum is over all $\beta=(\beta_1,\dots,\beta_b)\in \mathbb{Z}_2^b$ such that $\sum_{k=1}^b\beta_k$ is even, while in the second sum, $\alpha_i\cdot\alpha_j$ denotes the intersection number of $\alpha_i$ and $\alpha_j$.
\end{theorem}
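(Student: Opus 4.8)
The plan is to split the partition function according to the parities of the boundary monomers and to reduce each piece to a genuine dimer problem on the closed surface $\overline{\Sigma}$. Since every MD covering of $G$ leaves $|V(G)|-2\cdot\#(\text{dimers})$ vertices uncovered by dimers and $|V(G)|$ is even, the total number of monomers is always even; as all monomers lie on $\partial\Sigma$, this forces $\sum_k\beta_k$ to be even, where $\beta_k\in\mathbb{Z}_2$ records the parity of the number of monomers on $B_k$. I would therefore write
$$Z_{\mathcal{MD}}(G)=\sum_{\substack{\beta\in\mathbb{Z}_2^b\\ \sum_k\beta_k\ \text{even}}}Z^{\beta}_{\mathcal{MD}}(G),$$
where $Z^{\beta}_{\mathcal{MD}}(G)$ is the partial partition function summing only over MD coverings whose boundary monomers have parity $\beta_k$ on each $B_k$. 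It then suffices to identify each inner absolute value in the statement with the corresponding $Z^{\beta}_{\mathcal{MD}}(G)$.

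The core construction, extending the bijective method of \cite[Appendix E]{Lieb16}, is to build for each admissible $\beta$ an auxiliary graph $G_{\beta}$ embedded in $\overline{\Sigma}$. Concretely, inside each glued disk $D_k$ I would install a planar gadget on the vertices of $B_k$ (together with the virtual vertex $v_k$ of weight $1$ when $\beta_k=1$) whose interior perfect matchings realise exactly the admissible pairings of boundary monomers, each monomer $v$ contributing its weight $y_v$ and each $v_k$ contributing weight $1$. The gadget is designed so that, for any prescribed set of boundary monomers of the correct parity, the number of completions of a dimer covering inside the disks is a fixed constant; this yields a weight-preserving, constant-to-one map from $\mathcal{D}(G_{\beta})$ onto the coverings counted by $Z^{\beta}_{\mathcal{MD}}(G)$, hence an identity $Z_{\mathcal{D}}(G_{\beta})=c_{\beta}\,Z^{\beta}_{\mathcal{MD}}(G)$ with $c_\beta$ an explicit constant. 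The virtual vertex $v_k$ is exactly what absorbs the single unpaired monomer when $\beta_k=1$, and is precisely what the extra isolated vertices in Definition \ref{def: modified} encode.

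Next I would apply the Pfaffian formula for the dimer partition function on the closed surface graph $G_{\beta}\subset\overline{\Sigma}$, as reviewed in Section \ref{subsec: dimer} and proved in \cite{Cim09}: fixing a reference Kasteleyn orientation of $G_\beta$ and inverting along the curves $\alpha_i$ yields
$$Z_{\mathcal{D}}(G_{\beta})=\frac{1}{2^g}\bigg|\sum_{\epsilon\in\mathbb{Z}_2^{2g}}(-1)^{\sum_{i<j}\epsilon_i\epsilon_j\,\alpha_i\cdot\alpha_j}\,\text{Pf}(A^{K^{\epsilon}}(G_{\beta}))\bigg|,$$
where the quadratic-form signs take this simplified shape precisely because the reference orientation is chosen with $n^{K}(\gamma_i)$ odd, killing the linear part of the associated quadratic form; this is the role of the special $K_\beta\in\mathcal{O}_\beta$ furnished by Lemma \ref{lem: exist}. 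It then remains to perform a Pfaffian reduction over the auxiliary vertices living inside the disks: expanding $\text{Pf}(A^{K^\epsilon}(G_\beta))$ over the gadget matchings collapses the gadget edges into the monomer terms $(-1)^{i+j}d_{ij}y_iy_j$ of Definition \ref{def: modified}, giving $\text{Pf}(A^{K^{\epsilon}}(G_{\beta}))=\pm\,c_\beta\,\text{Pf}(M^{K^{\epsilon}_{\beta}}_{\beta}(G))$ with a global sign independent of $\epsilon$, so that the constant $c_\beta$ cancels between the two identities. Here the chosen labelling along $B_k$ and the orientation conventions distinguishing $\beta_k=0$ from $\beta_k=1$ are exactly what produce the signs $(-1)^{i+j}$ and the underlying orientation $K^\epsilon_\beta\in\mathcal{O}_\beta$. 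Substituting back and summing over $\beta$ then gives the claimed formula.

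The main obstacle is the construction and analysis of the gadget $G_{\beta}$ together with the Pfaffian reduction: one must simultaneously ensure that the gadget embeds in the disks $D_k$ without changing the genus of $\overline{\Sigma}$, that it produces the correct constant $c_\beta$ and the correct parity selection (notably the role of $v_k$ when $\beta_k=1$), that its Kasteleyn orientations are compatible with those prescribing $\mathcal{O}_\beta$, and that the sign $(-1)^{i+j}$ emerging from the reduction matches the chosen labelling convention. Tracking these signs across the inversions along the $\alpha_i$ and reconciling the residual global sign ambiguity, which is why the statement takes absolute values term by term, is where the bulk of the careful bookkeeping will lie.
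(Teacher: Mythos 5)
Your proposal follows essentially the same route as the paper: decompose $Z_{\mathcal{MD}}(G)$ by the boundary monomer parities $\beta$, realise each piece as a dimer partition function of an auxiliary graph obtained by capping the boundary circuits with planar gadgets (the paper's ``shuriken'' graphs of Lemma \ref{lem: bijection}), apply the closed-surface Pfaffian formula of Theorem \ref{theo: dimer}, and collapse the gadget via a Pfaffian reduction to the modified adjacency matrix (the paper's Proposition \ref{pro: matrix}). The outline is correct; the two steps you flag as the main obstacle are precisely where the paper invests its technical work, and the constant $c_{\beta}=2^{\#\{k:\beta_k=0\}}\prod_{k:\beta_k=1}N_k$ does cancel between the two identities as you anticipate, with the residual sign $(-1)^{\sum_k N_k/2}$ absorbed by the absolute value.
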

Note that by our formula, the number of Pfaffians is equal to $2^{2g+b-1}$, which is exactly the order of $H_1(\Sigma;\mathbb{Z}_2)$. In particular, when $g=0$ and $b=1$, the MD partition function is given by a single Pfaffian: this is precisely the formula stated in \cite{Lieb16}.
\begin{remark} There exists another version of Theorem \ref{theo: main} which can be stated as follows. Let us first denote by $\mathcal{S}$ the set of all $\beta=(\beta_1,\dots,\beta_b)\in \mathbb{Z}_2^b$ with $\sum_{k=1}^b\beta_k$ even so that there exists a matching $D_{\beta}$ of $G$ that covers an even (resp. odd) number of vertices on $B_k$ with $\beta_k=0$ (resp. $\beta_k=1$) for every $1\leq k \leq b$. By \cite[Theorem 3.5]{Cim09} (see also \cite{CimRes08}), one can prove that for each $\beta\in \mathcal{S}$ and for each matching $D_{\beta}$ as above, there exists a natural bijection $\psi:=\psi_{D_{\beta}}$ between the set $\mathcal{Q}(\overline{\Sigma})$ of quadratic forms on $\overline{\Sigma}$ and the set $\{[K]: K\in \mathcal{O}_{\beta}\}$ of all the equivalence classes of orientations in $\mathcal{O}_{\beta}$ (we refer the reader to \cite{CimRes08,Cim09} for more details). Denoting by $\text{Arf}(q)$ the Arf invariant of the quadratic form $q$, then by \cite[Theorem 3.8]{Cim09}, the MD partition function of the weighted graph $(G,x,y)$ can be given by $$Z_{\mathcal{MD}}(G)=\frac{1}{2^g} \sum_{\beta\in \mathcal{S}} \sum_{q\in \mathcal{Q}(\overline{\Sigma})}\pm (-1)^{\text{Arf}(q)}\text{Pf}(M^{\psi(q)}_{\beta}(G)).$$ This version of our main result is interesting from geometrical point of view, however in the present work we will only focus on the more practical version stated in Theorem \ref{theo: main}, and will therefore not give further details.
  \end{remark}
  To conclude this section, we will show how the formula stated in Theorem \ref{theo: main} works in a concrete example.
  \begin{example}
      \begin{figure}[t]
\centering
  \includegraphics[height=130pt]{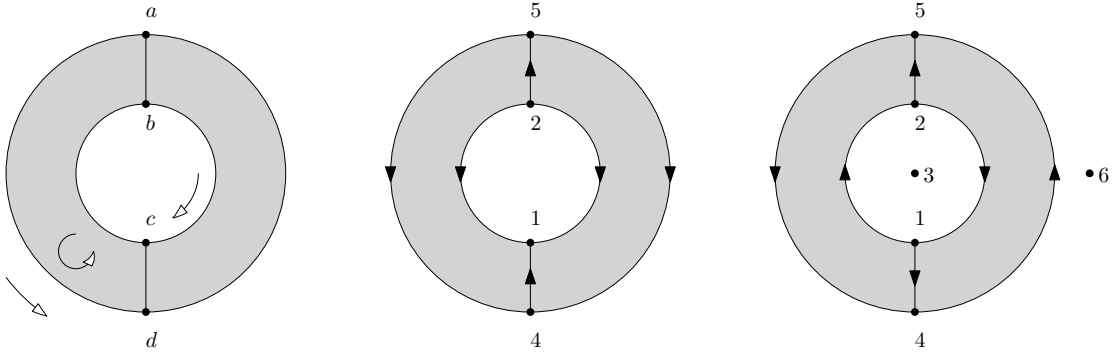}
  \caption{From left to right: the weighted graph $G$, the orientation $K_{0,0}$ and the orientation $K_{1,1}$ (together with a vertex labelling of $G$).} \label{fig: annulus}
\end{figure}
Let $G$ be the weighted graph embedded in the annulus (i.e. $g=0$ and $b=2$) illustrated in Figure \ref{fig: annulus}. Assume that the four vertices of $G$ are weighted $a,b,c,d$ while its edges are all weighted 1. Following the argument after Definition \ref{def: modified}, we label the vertices of $G$ corresponding to two parameters $(0,0)$ and $(1,1)$ and choose the specific orientations $K_{0,0}$ and $K_{1,1}$ as in Figure \ref{fig: annulus}. Then by Theorem \ref{theo: main} we have $$Z_{\mathcal{MD}}(G)=\big|\text{Pf}(M_{0,0}^{K_{0,0}})+ \text{Pf}(M_{1,1}^{K_{1,1}})\big|$$ where the two matrices $M_{0,0}^{K_{0,0}}$ and $M_{1,1}^{K_{1,1}}$ are given by $$M_{0,0}^{K_{0,0}}=\begin{pmatrix}
  0& -2-bc & -1&0\\
  2+bc& 0 & 0&1 \\
  1& 0& 0&-2-ad\\
  0& -1 & 2+ad&0
 \end{pmatrix}$$ and $$M_{1,1}^{K_{1,1}}=\begin{pmatrix}
  0& -bc & c&1&0&0\\
  bc& 0 & -b&0&1&0 \\
  -c& b& 0&0&0&0\\
  -1& 0 & 0&0&-ad&d\\
  0&-1&0&ad&0&-a\\
  0&0&0&-d&a&0
 \end{pmatrix}.$$ A direct calculation leads to $$\text{Pf}(M_{0,0}^{K_{0,0}})=5+2ad+2bc+abcd\qquad \text{Pf}(M_{1,1}^{K_{1,1}})=ab+cd$$ and so we obtain $$Z_{\mathcal{MD}}(G)=5+2ad+2bc+abcd+ab+cd.$$ As a reality check, one can verify easily that $\text{Pf}(M_{0,0}^{K_{0,0}})$ (resp. $\text{Pf}(M_{1,1}^{K_{1,1}})$) counts all the weighted MD coverings of $G$ with an even (resp. odd) number of monomers on each boundary component, and hence their sum is precisely equal to $Z_{\mathcal{MD}}(G)$.
  \end{example}
\section{Proof of the main result}\label{sec: proof} This section is devoted to prove Theorem \ref{theo: main} stated above. Since its proof needs some tools coming from the dimer model, let us first review some basis facts about this model and state some main results that we will need for our proofs. We refer readers to \cite{CimRes07} (resp. \cite{Cim09}) for the Pfaffian formula for the dimer partition function of graphs embedded in orientable (reps. possibly non-orientable) closed surfaces, and to \cite{CimRes08} for graphs embedded in surfaces with boundary.
\subsection{The dimer model on surface graphs} \label{subsec: dimer}
Let us first recall that the dimer partition function of the graph $G$ endowed with an edge weight system $x=(x_e)_{e\in E(G)}$ is given by
$$Z_{\mathcal{D}}(G)=\sum_{\tau_D\in \mathcal{D}(G)}\prod_{e\in \tau_D}x_e.$$
Recall also that an orientation $K$ on the edges of $G$ is called Kasteleyn if $n^K(\partial f)$ is odd for every face $f$ of $G$, where for $C$ an oriented cycle, $n^K(C)$ is the number of times that, travelling along $C$ by its orientation, we travel along an edge in the opposite direction to the one given by $K$. We will need the following result which can be found in \cite[ Proposition 1]{CimRes08}.
\begin{proposition} \label{pro: exist}
  Let $G$ be a graph embedded in an orientable surface $\Sigma$ of genus $g$ with $b$ boundary components $C_1,\dots,C_b$, and let $\beta_k$ be 0's or 1's for $1\leq k\leq b$. Then there exists a Kasteleyn orientation $K$ on $G$ such that $1+n^K(-C_k)\equiv \beta_k$ (modulo 2) for all $k$ if and only if $\sum_{k=1}^b \beta_k$ has the same parity as $|V(G)|$. \qed
\end{proposition}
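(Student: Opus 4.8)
This is a statement about the existence of a Kasteleyn orientation with prescribed behaviour (mod 2) along each boundary component, subject to a single global parity constraint. My plan is to prove it by capping off the boundary and reducing to the well-understood closed-surface case. Concretely, recall that $\overline{\Sigma}$ is the closed genus-$g$ surface obtained by gluing a disc $D_k$ along each $C_k$. I would like to build a graph $\overline{G}$ embedded in $\overline{\Sigma}$ by placing, for each $k$, a single new vertex $w_k$ inside the disc $D_k$ and joining it to the vertices of $G$ lying on $C_k$ (or, more simply, by coning the boundary cycle $C_k$ off to $w_k$). An orientation of $G$ together with a choice of orientation on the new edges then gives an orientation of $\overline{G}$, and the Kasteleyn condition on the ``old'' faces of $G$ is unchanged, while the new faces of $\overline{G}$ are exactly the triangles/sectors sitting inside each $D_k$, plus the constraint coming from the cell that the cone caps off.

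The key computation is to relate the Kasteleyn condition on the capping faces to the quantity $n^K(-C_k)$. Travelling around $C_k$ and recording where $K$ disagrees with the boundary orientation produces the number $n^K(-C_k)$; on the other hand, the Kasteleyn defects of the newly added faces inside $D_k$ telescope so that their total parity is governed by $n^K(-C_k)$ plus a fixed contribution from the orientations chosen on the cone edges. I would first check (a routine but essential bookkeeping step) that, given any orientation $K$ on $G$, the cone edges can be oriented so that all but possibly one of the capping faces of $D_k$ are Kasteleyn, and that the single remaining defect encodes precisely the residue $1+n^K(-C_k) \pmod 2$. This converts the prescribed boundary data $\beta_k$ into prescribed Kasteleyn behaviour on $b$ specific faces of the closed surface graph $\overline{G}$.

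With this reduction in hand, the statement becomes: there is a Kasteleyn orientation of $\overline{G}$ (i.e. every face has odd defect) precisely when the $b$ prescribed residues $\beta_k$ are compatible. Here I would invoke the classical closed-surface theory: on a closed surface, the Kasteleyn defects $n^K(\partial f) \pmod 2$ of the faces are \emph{not} independent, but satisfy exactly one linear relation, namely $\sum_f \bigl(1+n^K(\partial f)\bigr) \equiv |V(G)| \pmod 2$. This relation follows from a double-counting / handshake argument: each edge of $\overline{G}$ borders two faces (counted with multiplicity) and each vertex reconciliation contributes in a controlled way, so that summing the face defects over the whole surface collapses to the parity of the number of vertices. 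Applying this single global relation to $\overline{G}$, whose vertex set is $V(G)$ together with the $b$ cone vertices $w_k$, and remembering that each $w_k$ shifts the count by one, yields exactly the condition that $\sum_{k=1}^b \beta_k \equiv |V(G)| \pmod 2$. Thus the prescribed data is realisable if and only if this parity matches.

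The main obstacle I anticipate is the bookkeeping in the second paragraph: one must orient the cone edges and track how the defects of the several capping faces inside a single disc $D_k$ combine, making sure the telescoping leaves a clean single residue equal to $1+n^K(-C_k)$ and does not secretly impose extra constraints. Once that local analysis is pinned down, the global half of the argument is just the standard one relation among face parities on a closed surface, and the ``if and only if'' falls out by a dimension count (the parity constraints on $b$ faces of $\overline{G}$ can be solved simultaneously exactly when they are consistent with the unique global relation). Since the proposition is quoted from \cite{CimRes08}, I would keep this proof brief, emphasising the capping reduction and citing the closed-surface relation rather than re-deriving the full Kasteleyn existence theory.
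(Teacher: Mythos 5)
The paper does not prove this proposition at all: it is quoted verbatim from \cite{CimRes08} and stamped with a \verb|\qed|, so there is no in-paper argument to compare against. Judged on its own merits, your capping strategy is sound and is essentially the argument used in the cited source. The local bookkeeping you flag does work out: coning $C_k$ off to $w_k$ creates $N_k$ triangles; summing $n^{\overline{K}}(\partial t)$ over them, each cone edge contributes exactly $1$ (it is traversed once in each direction) and the edges of $C_k$ contribute $n^K(-C_k)$, so the total is $N_k+n^K(-C_k)$ independently of how the cone edges are oriented. Orienting the cone edges one after another makes all triangles but the last Kasteleyn, and the last one then has $n(\partial t)\equiv n^K(-C_k)+1$, which is the clean single residue you predicted. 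Feeding this into the closed-surface relation $\sum_f\bigl(1+n(\partial f)\bigr)\equiv|V(\overline{G})|=|V(G)|+b\pmod 2$ (which follows from $\sum_f n(\partial f)=|E|$ plus Euler's formula, rather than any ``vertex reconciliation'') gives exactly $\sum_k\beta_k\equiv|V(G)|$, proving the ``only if'' direction. The one place where your write-up is thinner than it should be is the ``if'' direction: realisability of an arbitrary face-parity assignment subject to the single global relation is not literally a dimension count but requires the observation that reversing one edge toggles the defects of precisely its two adjacent faces, together with connectivity of the dual graph of $\overline{G}$, so that defects can be cancelled in pairs along dual paths. With that standard lemma made explicit (and it is where the hypothesis that $G$ is connected enters), your sketch is a complete and correct proof.
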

Next we will recall the Pfaffian formula for the dimer partition function of surface graphs. Before doing that, let us quickly recall the definition of Pfaffians. Given a skew-symmetric matrix $A=(a_{ij})_{1\leq i,j\leq 2n}$, its $\emph{Pfaffian}$ is defined by $$\text{Pf}(A)=\frac{1}{2^n n!}\sum_{\sigma} \text{sign} (\sigma)a_{\sigma(1)\sigma(2)}\cdots a_{\sigma(2n-1)\sigma(2n)},$$ where the sum is over the set of all permutations $\sigma$'s of $\{1,\dots,2n\}$, and $\text{sign}(\sigma)=\pm1$ is the signature of $\sigma$. An important property of Pfaffians is that, if we add $\lambda$ times a row $r$ of $A$ to a row $s$, and do the same for corresponding columns, then the Pfaffian of $A$ does not change (cf. \cite[Corollary 4]{Austin09}). We will use this property later in our proof.

In this subsection, we are interested in the Pfaffian of adjacency matrices. These matrices can be defined as follows. If $(G,x)$ is an edge-weighted graph of $2n$ vertices labelled by $\{1,\dots, 2n\}$ and $K$ is an orientation on its edges, then the adjacency matrix of $G$ with respect to $K$, denoted by $A^K(G,x)=(a_{ij})_{1\leq i,j\leq 2n}$, has entries defined by $$a_{ij}=\sum_{e=(i,j)}\epsilon^K_{ij}(e)x(e).$$ Here the sum is taken over all the edge $e$ of $G$ between two vertices $i,j$, and $$\epsilon^K_{ij}(e)= \left\{  \begin{array}{ll}
        +1& \text{if $e$ is oriented by $K$ from $i$ to $j$};\\
        -1& \text{if not}.
\end{array} \right. $$

The Pfaffian formula for the dimer partition function of surface graphs can be stated as follows. Recall that the set $\{\alpha_i\}_{1\leq i\leq 2g}$ consists of $2g$ closed curves transverse to $G$ whose homology classes form a basis of $H_1(\overline{\Sigma};\mathbb{Z}_2)$, and $\gamma_i$ is an oriented cycle of $G$ having $\alpha_i$ on its immediate left for all $i$. Let $K$ be a Kasteleyn orientation on $G$ considered as a graph embedded in $\overline{\Sigma}$ such that $n^K(\gamma_i)$ is odd for every $1\leq i\leq 2g$. For any $\epsilon=(\epsilon_1,\dots,\epsilon_{2g})\in \mathbb{Z}_2^{2g}$, denote by $K^{\epsilon}$ the orientation obtained by inverting $K$ on every edge $e$ each time $e$ crosses $\alpha_i$ with $\epsilon_i=1$ for every $1\leq i\leq 2g$. With all these notations, we have the following result \cite[Theorem 3.9]{Cim09}.
\begin{theorem}\label{theo: dimer}
  Considering $(G,x)$ as an edge-weighted graph embedded in the closed surface $\overline{\Sigma}$, then the dimer partition function of $G$ is given by 
 $$ Z_{\mathcal{D}}(G)=\frac{1}{2^g}\bigg|\sum_{\epsilon\in \mathbb{Z}_2^{2g}}(-1)^{\sum_{i<j}\epsilon_i\epsilon_j\alpha_i\cdot\alpha_j} \text{Pf}(A^{K^{\epsilon}}(G,x))\bigg|,$$ where $A^{K^{\epsilon}}(G,x)$ is the adjacency matrix of $G$ with respect to the orientation $K^{\epsilon}$. \qed
\end{theorem}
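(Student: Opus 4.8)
The plan is to track the signs produced by the Pfaffian expansion and to show that they are organised by the $\mathbb{Z}_2$-homology of $\overline{\Sigma}$, so that summing over the $2^{2g}$ orientations $K^{\epsilon}$ with the prescribed weights isolates the full partition function; this is the classical Kasteleyn argument in the form due to Cimasoni--Reshetikhin. First I would expand the Pfaffian of the skew-symmetric matrix $A^{K}(G,x)$ into a signed sum over perfect matchings. Grouping the permutations in the definition of $\text{Pf}$ according to the pairing of vertices they induce gives
$$\text{Pf}(A^{K}(G,x))=\sum_{D\in\mathcal{D}(G)}\text{sign}_K(D)\prod_{e\in D}x_e,$$
where $\text{sign}_K(D)\in\{\pm1\}$. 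Fixing a reference matching $D_0$, I would compute the relative sign $\text{sign}_K(D)\,\text{sign}_K(D_0)$ from the superposition $D\cup D_0$. Since $D$ and $D_0$ are matchings, this superposition is a disjoint union of simple closed curves $C_1,\dots,C_r$ in $\overline{\Sigma}$, and a direct bookkeeping of transpositions and orientation signs shows that $\text{sign}_K(D)\,\text{sign}_K(D_0)=\prod_{j}(-1)^{\,n^K(C_j)+1}$. Hence the relative sign depends on $K$ only through the quantities $n^K(C_j)$.

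Next I would show that $\text{sign}_K(D)\,\text{sign}_K(D_0)$ depends only on the class $[D\oplus D_0]\in H_1(\overline{\Sigma};\mathbb{Z}_2)$. The Kasteleyn condition, namely $n^K(\partial f)$ odd for every face $f$, implies by additivity over the faces enclosed by a null-homologous cycle that $n^K(C)+1$ is even whenever $[C]=0$; more generally the assignment $q_K(h):=\sum_j\big(n^K(C_j)+1\big)\bmod 2$, for any representative, is well defined and defines a quadratic form on $H_1(\overline{\Sigma};\mathbb{Z}_2)$ whose associated bilinear form is the intersection pairing. This is the decisive use of the Kasteleyn hypothesis. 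Because $\gamma_i$ is homologous to $\alpha_i$ and $n^K(\gamma_i)$ is odd by assumption, the form $q_K$ vanishes on each basis class $[\alpha_i]$, so $q_K(h)=\sum_{i<j}h_ih_j\,\alpha_i\cdot\alpha_j=:q_0(h)$ for $h=\sum_i h_i[\alpha_i]$. Writing $Z_h$ for the sum of $\prod_{e\in D}x_e$ over all matchings $D$ with $[D\oplus D_0]=h$, we obtain $\text{Pf}(A^{K}(G,x))=\pm\sum_{h}(-1)^{q_0(h)}Z_h$.

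Finally I would analyse the modifications $K\mapsto K^{\epsilon}$ and perform the Gauss sum. Inverting $K$ along the edges crossing each $\alpha_i$ with $\epsilon_i=1$ changes $n^{K}(C)$ by $\alpha_i\cdot C\bmod 2$, so $q_{K^{\epsilon}}(h)=q_0(h)+\epsilon\cdot h$ and $\text{Pf}(A^{K^{\epsilon}}(G,x))=\pm\sum_h(-1)^{q_0(h)+\epsilon\cdot h}Z_h$. Substituting into the right-hand side of the asserted formula and using the Arf-invariant identity $\text{Arf}(q_0+\ell_h)=\text{Arf}(q_0)+q_0(h)$ together with the Gauss sum $\sum_{\epsilon}(-1)^{q_0(\epsilon)+\epsilon\cdot h}=\pm2^g(-1)^{\text{Arf}(q_0)+q_0(h)}$ gives
$$\sum_{\epsilon\in\mathbb{Z}_2^{2g}}(-1)^{\sum_{i<j}\epsilon_i\epsilon_j\,\alpha_i\cdot\alpha_j}\text{Pf}(A^{K^{\epsilon}}(G,x))=\pm\,2^g(-1)^{\text{Arf}(q_0)}\sum_h Z_h=\pm\,2^g(-1)^{\text{Arf}(q_0)}Z_{\mathcal{D}}(G),$$
where the factors $(-1)^{q_0(h)}$ cancel and every homology class $h$ contributes with the same overall sign. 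Dividing by $2^g$ and taking absolute values then yields the claim.

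I expect the main obstacle to be the middle step: proving that the relative Pfaffian sign descends to $H_1(\overline{\Sigma};\mathbb{Z}_2)$ and assembles into a quadratic form polarised by the intersection pairing. Equivalently, the difficulty is to control the signs coherently across all $2^{2g}$ orientations rather than only up to an uncontrolled per-term sign; it is precisely this coherence that forces the quadratic weights $(-1)^{\sum_{i<j}\epsilon_i\epsilon_j\alpha_i\cdot\alpha_j}$ and the outer absolute value, and it is where the topology of $\overline{\Sigma}$ enters in an essential way.
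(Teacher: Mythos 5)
The paper does not prove this statement at all: Theorem \ref{theo: dimer} is imported wholesale from \cite[Theorem 3.9]{Cim09} (hence the tombstone placed directly after the statement), so there is no internal proof to compare your argument against. What you have written is a faithful reconstruction of the Cimasoni--Reshetikhin proof of that cited result, and the skeleton is correct: Pfaffian expansion over matchings, relative sign of $D$ against a reference matching $D_0$ computed cycle-by-cycle on the superposition as $\prod_j(-1)^{n^K(C_j)+1}$, descent of that sign to a quadratic form on $H_1(\overline{\Sigma};\mathbb{Z}_2)$ polarised by the intersection pairing, normalisation $q_K([\alpha_i])=0$ forced by the hypothesis that $n^K(\gamma_i)$ is odd, the shift $q_{K^{\epsilon}}=q_K+\langle\epsilon,\cdot\rangle$, and the Gauss/Arf sum that makes the $(-1)^{q_0(h)}$ factors cancel so every homology class contributes with one global sign absorbed by the absolute value. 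Two points deserve more care than your sketch gives them. First, the claim that $n^K(C)+1$ is even for null-homologous $C$ is not a consequence of the Kasteleyn condition alone: Kasteleyn's face-additivity gives $n^K(C)\equiv V+1\pmod 2$ where $V$ is the number of vertices enclosed by $C$, and one needs the fact that for a cycle of $D\triangle D_0$ the enclosed vertices are matched among themselves (hence $V$ even) to conclude. Second, well-definedness of $q_K$ on homology classes and the quadratic identity $q_K(a+b)=q_K(a)+q_K(b)+a\cdot b$ are exactly the technical core of \cite{CimRes07}; you correctly identify this as the main obstacle, but as written it is asserted rather than proved. With those two gaps filled (by citation or by the arguments of \cite{CimRes07,Cim09}), your outline is a complete and correct proof, and arguably more informative than the paper, which simply defers to the literature here.
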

We will use this theorem to prove our main result in the next subsection.
\subsection{Proof of Theorem \ref{theo: main}}\label{subsec: main proof} In this subsection we will use the preliminaries given in the previous part to prove Theorem \ref{theo: main}. The proof is based on Theorem \ref{theo: dimer} stated above together with some bijections between monomer-dimer coverings of $G$ and dimer coverings of some auxiliary graphs. We then prove that Pfaffians of adjacency matrices of these auxiliary graphs are equal to that of modified adjacency matrices of $G$ (recall Definition \ref{def: modified}) by some elementary matrix transformations together with the Laplace expansion for Pfaffians.

Before going into details, let us recall that $G$ is assumed to have the boundary circuits $B_k$ which coincide with the boundary components $C_k$ of the surface $\Sigma$. We also recall that the number $N_k$ of the vertices of $G$ on $B_k$ is even for every $1\leq k\leq b$, as well as the total number $|V(G)|$ of the vertices of $G$. We now begin by proving that $\mathcal{O}_{\beta}$ is non-empty, and that there exists $K_{\beta}$ as required (recall the arguments after Definition \ref{def: modified}).
\begin{lemma} \label{lem: exist}
  For each $\beta=(\beta_1,\dots,\beta_b)\in \mathbb{Z}_2^b$ with $\sum_{k=1}^b \beta_k$ even, the set $\mathcal{O}_{\beta}$ is nonempty. Moreover, there exists $K_{\beta}\in \mathcal{O}_{\beta}$ such that $n^{K_{\beta}}(\gamma_i)$ is odd for every $1\leq i\leq 2g$.
 \end{lemma}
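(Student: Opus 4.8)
The plan is to reduce the problem to the existence statement already available in Proposition \ref{pro: exist}, and then to upgrade a generic Kasteleyn orientation to one satisfying the extra constraints on the cycles $\gamma_i$ by the standard ``flipping along a cut'' argument. Recall that $\mathcal{O}_\beta$ consists of orientations that are Kasteleyn (as orientations of $G\subset\Sigma$, i.e. on the finite faces) and that restrict on each $B_k$ to the prescribed form dictated by $\beta_k$. The first observation is that the defining condition of $\mathcal{O}_\beta$ on the boundary circuit $B_k$ is exactly a constraint on $n^K(-C_k)$ modulo $2$: when we add the virtual vertex $v_k$ for $\beta_k=1$ and prescribe $K$ along $B_k$ as in the definition, the parity of $n^K(-C_k)$ is forced to agree with $\beta_k$ up to the shift appearing in Proposition \ref{pro: exist}.

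First I would make this translation precise. Viewing each boundary circuit $B_k$ (together with $v_k$ when $\beta_k=1$) as the boundary component $C_k$ of $\Sigma$, the prescription ``$K$ from big to small when $\beta_k=0$'' and ``$K$ agrees with the natural orientation when $\beta_k=1$'' should be shown to be equivalent to fixing $1+n^K(-C_k)\equiv \beta_k \pmod 2$. Here the chosen labelling along $B_k$ (increasing along the inverse orientation) is precisely what makes the ``big-to-small'' condition coincide with the natural orientation being reversed along all of $B_k$, which contributes a controlled parity to $n^K(-C_k)$; adding $v_k$ with the appropriate adjacent edges accounts for the extra unit when $\beta_k=1$. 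Once this identification is in place, the requirement $\sum_k\beta_k$ even is exactly the hypothesis $\sum_k\beta_k\equiv|V(G)|\pmod 2$ of Proposition \ref{pro: exist}, since $G$ has been arranged to have an even total number of vertices (and adding the $v_k$'s for $\beta_k=1$ changes $|V|$ by $\sum_k\beta_k$, which is even, preserving evenness). Therefore Proposition \ref{pro: exist} yields a Kasteleyn orientation realising the prescribed boundary behaviour, proving $\mathcal{O}_\beta\neq\emptyset$.

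Next I would produce a $K_\beta\in\mathcal{O}_\beta$ with $n^{K_\beta}(\gamma_i)$ odd for all $1\leq i\leq 2g$. Starting from any $K\in\mathcal{O}_\beta$, the idea is to flip the orientation along the curves $\alpha_i$ to adjust the parities $n^K(\gamma_i)$ one at a time. Flipping $K$ on every edge crossed by a closed curve $\alpha$ transverse to $G$ preserves the Kasteleyn property on every finite face (each face boundary meets $\alpha$ an even number of times) and leaves the boundary behaviour on each $B_k$ unchanged (since the $\alpha_i$ are taken to avoid $\partial G$), so the flipped orientation stays in $\mathcal{O}_\beta$. The effect on $n^K(\gamma_i)$ of flipping along $\alpha_j$ is to change its parity by $\alpha_i\cdot\alpha_j \pmod 2$, the mod-$2$ intersection number. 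Since the $[\alpha_i]$ form a basis of $H_1(\overline\Sigma;\mathbb{Z}_2)$, the $\mathbb{Z}_2$-intersection matrix $(\alpha_i\cdot\alpha_j)$ is the symplectic form and hence invertible over $\mathbb{Z}_2$; thus for any target parity vector, in particular the all-odd one, there is a choice of subset of the $\alpha_j$ to flip along that realises it. Applying these flips to the initial $K$ produces the desired $K_\beta$.

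The main obstacle I anticipate is the bookkeeping in the first step: verifying rigorously that the boundary prescription defining $\mathcal{O}_\beta$ (stated in terms of the labelling and the local orientation of edges along $B_k$, plus the addition of $v_k$) is equivalent to the parity condition $1+n^{K}(-C_k)\equiv\beta_k$ used in Proposition \ref{pro: exist}. This requires carefully tracking how travelling along $-C_k$ counts edges oriented against $K$, how the ``increasing along the inverse orientation'' labelling interacts with the ``big-to-small'' rule, and how inserting the single virtual vertex $v_k$ shifts this count by exactly one. The flipping argument of the third step is comparatively standard, its only subtlety being the invertibility of the $\mathbb{Z}_2$-intersection form on $H_1(\overline\Sigma;\mathbb{Z}_2)$, which is immediate from the nondegeneracy of the intersection pairing on a closed orientable surface.
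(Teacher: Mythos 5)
Your second step (upgrading an element of $\mathcal{O}_\beta$ so that every $n^{K_\beta}(\gamma_i)$ is odd) is essentially correct and close to the paper's argument: the paper flips along a curve $\alpha_i^*$ dual to $\alpha_i$, which toggles only the parity of $n(\gamma_i)$, whereas you flip along the $\alpha_j$ themselves and invoke the nondegeneracy of the $\mathbb{Z}_2$-intersection form to solve for the right subset; these are equivalent, since $\alpha_i^*$ is exactly such a $\mathbb{Z}_2$-combination of the $\alpha_j$'s.

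The first step, however, has a genuine gap. Membership in $\mathcal{O}_\beta$ is not a parity condition on $B_k$: it is an \emph{edge-by-edge} prescription of the orientation along the whole boundary circuit (big-to-small when $\beta_k=0$, the natural orientation when $\beta_k=1$). The parity condition $1+n^K(-C_k)\equiv\beta_k$ is implied by this prescription (that computation is fine, and it is what makes Proposition \ref{pro: exist} applicable), but it is far from equivalent to it: there are $2^{N_k}$ orientations of the edges of $B_k$ and only two parity classes. So Proposition \ref{pro: exist} only hands you a Kasteleyn orientation whose restriction to $B_k$ lies in the correct parity class, and your conclusion ``therefore Proposition \ref{pro: exist} yields a Kasteleyn orientation realising the prescribed boundary behaviour'' does not follow. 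The missing idea, which is the actual content of the paper's first step, is a mechanism to correct the boundary restriction without destroying the Kasteleyn property: given two edges $e_1,e_2$ of $B_k$, draw an arc starting and ending in the capping disk $D_k$ that crosses $e_1$, then some interior edges, then $e_2$, and invert the orientation on every edge it crosses. This flips the orientation on exactly the pair $e_1,e_2$ among the edges of $B_k$ and preserves Kasteleyn-ness; since the set of ``wrongly oriented'' boundary edges has even cardinality (this is precisely what the parity condition buys you), repeating such pair-flips reaches the exact prescribed orientation on each $B_k$. Without some such argument your proof of $\mathcal{O}_\beta\neq\emptyset$ is incomplete. A minor further point: the virtual vertex $v_k$ is \emph{isolated} (it only enters the matrix $M^K_\beta(G)$, not the embedded graph), so it has no adjacent edges and plays no role in the count $n^K(-C_k)$; the parity shift between the $\beta_k=0$ and $\beta_k=1$ cases comes from the two different prescribed orientations on $B_k$, not from $v_k$.
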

 \begin{proof}
  Let us fix a $\beta=(\beta_1,\dots,\beta_b)\in \mathbb{Z}_2^b$ with $\sum_{k=1}^b \beta_k$ even. Note that the length of $B_k$ is equal to $N_k$ which is even, hence an element in $\mathcal{O}_{\beta}$ firstly must be some Kasteleyn orientation $L_{\beta}$ satisfying $n^{L_{\beta}}(-B_k)\equiv n^{L_{\beta}}(B_k)\equiv\beta_k+1$ modulo 2 for all $1\leq k\leq b$. Such an orientation $L_{\beta}$ clearly exists by Proposition \ref{pro: exist} since both $|V(G)|$ and $\sum_{k=1}^b\beta_k$ are even. Moreover, one can obtain an orientation belonging to $\mathcal{O}_{\beta}$ from such $L_{\beta}$ by the following transformation: one can transform $L_{\beta}$ to a Kasteleyn orientation $L_{\beta}'$ so that the restrictions of $L_{\beta}$ and $L_{\beta}'$ on $B_k$ are only different on any two arbitrary edges $e_1,e_2$ of $B_k$ ($1\leq k\leq b$). Then by repeating this transformation if needed, one obtains an element in $\mathcal{O}_{\beta}$. Now to construct $L_{\beta}'$ we can do as follows: draw a path $\gamma$ from the interior of the disk $D_k$ (whose boundary is $B_k$) first crossing transversely $e_1$, then some edges of $G$  and then $e_2$, finally coming back to the interior of $D_k$; invert $L_{\beta}$ on every edge each time this edge is crossed by $\gamma$ to obtain $L_{\beta}'$. It is easy to see that since $L_{\beta}$ is Kasteleyn, so is $L_{\beta}'$. Moreover by the construction $L_{\beta}$ and $L_{\beta}'$ restricted to $B_k$ are only different on $e_1,e_2$. We have proved the first part of the lemma.
  
 For the second part, to construct $K_{\beta}$ as required, let us pick an element $J_{\beta}\in \mathcal{O}_{\beta}$. If $n^{J_{\beta}}(\gamma_i)$ is odd for every $1\leq i\leq 2g$ then we are done. If there exists $i$ so that $n^{J_{\beta}}(\gamma_i)$ is even, let us pick a closed curve $\alpha_i^*$ transverse to $G$ and disjoint from $\partial \Sigma$ so that its homology class in $H_1(\overline{\Sigma};\mathbb{Z}_2)$ is dual to that of $\alpha_i$. Inverting $J_{\beta}$ on every edge $e$ each time $e$ is crossed by $\alpha_i^*$ results to change the parity of $n^{J_{\beta}}(\gamma_i)$ but not the parity of $n^{J_{\beta}}(\gamma_j)$ for every $j\neq i$. Repeating this procedure for each $i$ with $n^{J_{\beta}}(\gamma_i)$ even, we get the orientation $K_{\beta}$ as expected.
\end{proof}
\begin{figure}[b]
\centering
  \includegraphics[height=120pt]{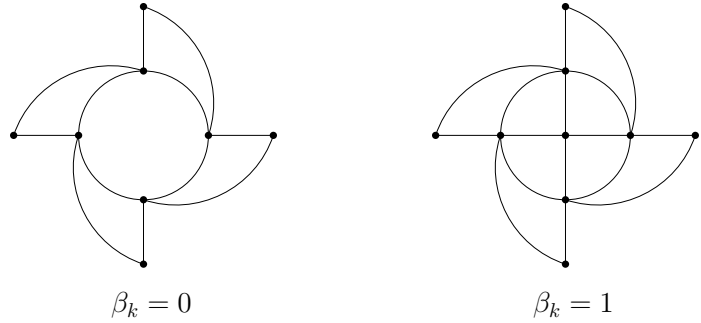}
  \caption{The shuriken graph $S_k$ corresponding to $\beta_k=0$ and $\beta_k=1$ for $N_k$=4.} \label{fig: shuriken}
\end{figure}
\begin{figure}[b]
\centering
  \includegraphics[height=120pt]{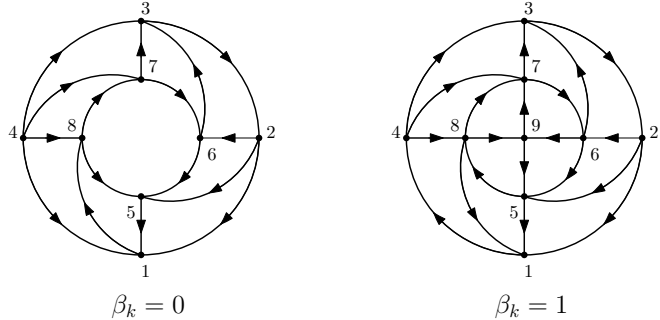}
  \caption{The vertex labelling and the orientation on $S_k$ added to the boundary circuit $B_k$ for $N_k=4$.} \label{fig: shuriken added}
\end{figure} 
We now continue by introducing some auxiliary graphs $G_{\beta}$ which depends on the parameter $\beta=(\beta_1,\dots,\beta_b)\in \mathbb{Z}_2^b$ with $\sum_{k=1}^b \beta_k$ even. Inspired by \cite[Appendix E]{Lieb16}, we define $G_{\beta}$ as the graph obtained from $G$ by adding a planar ``shuriken'' graph $S_k$ of $N_k$ ``blades'' on each boundary circuit $B_k$ of $G$ ($1\leq k\leq b$) so that all the ``blade-tip'' vertices of $S_k$ coincide with vertices on $B_k$  (see Figures \ref{fig: shuriken} and \ref{fig: shuriken added}). Note that there are two types of shuriken graphs, depending on whether $\beta_k$ is equal to 0 or 1. Moreover, if we label vertices on some boundary circuit $B_k$ of $G$ by $1,\dots,N_k$ then we would like to label the remaining vertices of $S_k$, called $\emph{inner vertices}$, by $N_k+1,\dots,2N_k$ (resp. by $N_k+1,\dots,2N_k+1$) if $\beta_k$ is even (resp. odd) consecutively following the inverse orientation of $B_k$. This labelling together with the specific labelling of the vertices of $G$ gives us a labelling of the vertices of $G_{\beta}$ that we will use from now on. Additionally, we can require that the vertex $j$ is adjacent to $j+N_k-1$ and $j+N_k$ for each $2\leq j\leq N_k$ and that the vertex 1 is adjacent to $N_k+1$ and $2N_k$ (see Figure \ref{fig: shuriken added}). The corresponding edges connecting these vertices will be called $\emph{blade edges}$, while the edges whose two endpoints are inner vertices will be called $\emph{inner edges}$. Furthermore we shall endow $G_{\beta}$ with an edge weight system as follows. Let $e$ be an edge of $G_{\beta}$. If $e$ is an edge of $G$ then it inherits the edge weight from $G$. If $e$ is a blade edge, then by definition $e$ is endowed with the vertex weight of its endpoint on the boundary circuit of $G$. Otherwise $e$ is weighted 1. From now on, without stated explicitly, we will always consider parameters $\beta=(\beta_1,\dots,\beta_b)\in \mathbb{Z}_2^b$ with $\sum_{k=1}^b \beta_k$ even so that $G_{\beta}$ has an even number of vertices. We also consider $G_{\beta}$ as a graph embedded in the closed surface $\overline{\Sigma}$. Note that we label the vertices of $G_{\beta}$ by the way mentioned above.

For $\beta=(\beta_1,\dots,\beta_b)\in \mathbb{Z}_2^b$, let us denote by $\mathcal{MD}^{\beta}(G)$ the set of all MD coverings of $G$ that consist of an even (resp. odd) number of monomers on $B_k$ with $\beta_k=0$ (resp. $\beta_k=1$) for every $1\leq k\leq b$. Setting the partial MD partition function of $G$ with respect to $\beta$ by $$Z_{\mathcal{MD}}^{\beta}(G):=Z_{\mathcal{MD}}^{\beta}(G,x,y):=\sum_{\tau\in \mathcal{MD}^{\beta}(G)}\prod_{e\in \tau_D}x_e\prod_{v\in \tau_M}y_v,$$ the following lemma indicates the relation between this partial partition function and the dimer partition function of $G_{\beta}$.
 \begin{lemma}\label{lem: bijection}
   For every $\beta=(\beta_1,\dots,\beta_b)\in \mathbb{Z}_2^b$ we have $$Z_{\mathcal{D}}(G_{\beta})=2^{\#\{k:\beta_k=0\}}\prod_{k:\beta_k=1}N_k\;Z_{\mathcal{MD}}^{\beta}(G).$$
 \end{lemma}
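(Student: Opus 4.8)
The plan is to set up a weight-preserving ``one-to-many'' correspondence between the dimer coverings of $G_{\beta}$ and the MD coverings of $G$ that lie in $\mathcal{MD}^{\beta}(G)$, arranged so that every MD covering has a fibre of the \emph{same} cardinality $2^{\#\{k:\beta_k=0\}}\prod_{k:\beta_k=1}N_k$. First I would define the restriction map $\Phi\colon\mathcal{D}(G_{\beta})\to\mathcal{MD}(G)$ as follows: given a dimer covering $\tau_D$ of $G_{\beta}$, set $\Phi(\tau_D)=(\sigma_D,\sigma_M)$, where $\sigma_D$ is the set of dimers of $\tau_D$ lying on edges of $G$, and $\sigma_M$ is the set of boundary vertices of $G$ that are matched by $\tau_D$ to an inner vertex via a blade edge. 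Since $\tau_D$ covers every vertex of $G$ exactly once, and each such vertex is covered either by a $G$-edge or by a blade edge, $\Phi(\tau_D)$ is a genuine MD covering of $G$ whose monomers all sit on $\partial G$.

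Next I would verify that $\Phi$ takes values in $\mathcal{MD}^{\beta}(G)$ and is weight-compatible. For the parity statement, note that all inner vertices of a given shuriken $S_k$ must be covered by $\tau_D$, and each can only be matched either to another inner vertex (an inner edge) or to a free blade-tip (a blade edge); hence the number of monomers on $B_k$ has the same parity as the number of inner vertices of $S_k$, namely $N_k$ if $\beta_k=0$ and $N_k+1$ if $\beta_k=1$. As $N_k$ is even, this parity is exactly the one prescribed by $\beta_k$, so $\Phi(\tau_D)\in\mathcal{MD}^{\beta}(G)$. For the weights, each $G$-dimer contributes its edge weight $x_e$, each monomer vertex $v$ is covered by a single blade edge which by construction carries the weight $y_v$, and every inner edge carries weight $1$; thus the weight of $\tau_D$ equals $\prod_{e\in\sigma_D}x_e\prod_{v\in\sigma_M}y_v$, the MD-weight of $\Phi(\tau_D)$, and this is independent of which of the two blade edges at $v$ is chosen.

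The heart of the argument, and the step I expect to be the \textbf{main obstacle}, is to show that the fibre $\Phi^{-1}(\tau)$ has constant cardinality $\prod_k c_k$ with $c_k=2$ when $\beta_k=0$ and $c_k=N_k$ when $\beta_k=1$, for every $\tau\in\mathcal{MD}^{\beta}(G)$. Fixing $\tau$ fixes, for each $k$, the set $T_k\subseteq B_k$ of monomers on $B_k$, and the elements of $\Phi^{-1}(\tau)$ are precisely the ways of independently completing, inside each $S_k$, a perfect matching of all inner vertices together with the free blade-tips in $T_k$. Everything therefore reduces to the purely local claim that this number of completions depends only on $\beta_k$ and $N_k$, and \emph{not} on which tips constitute $T_k$. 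To prove this I would use a transfer/propagation argument around the rim cycle of $S_k$: sweeping once around the rim, record a binary state indicating whether the rim vertex just passed is already matched by the edges selected so far. A short case analysis of the triangle sitting on each rim edge shows the local completion is forced, and that a free tip (monomer) carries the state across unchanged while a used tip (covered in $G$) switches it; hence the monodromy around the rim is trivial exactly when the number of used tips $N_k-|T_k|$ is even, i.e. (since $N_k$ is even) when $|T_k|$ is even, which is the case $\beta_k=0$, and then precisely the two choices of initial state give a valid completion, so $c_k=2$. When $\beta_k=1$ there is one extra inner vertex, which must be matched to one of the $N_k$ rim vertices; fixing its partner cuts the rim into a path with both ends pinned, and the same propagation argument leaves no further freedom, yielding a single completion, so summing over the $N_k$ possible partners gives $c_k=N_k$. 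In both cases the count is manifestly independent of $T_k$, which is exactly what is needed.

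Finally, grouping the dimer coverings of $G_{\beta}$ according to their image under $\Phi$ and using that each fibre has constant size and constant weight, I would conclude
\[
Z_{\mathcal{D}}(G_{\beta})=\sum_{\tau_D\in\mathcal{D}(G_{\beta})}\prod_{e\in\tau_D}x_e=\sum_{\tau\in\mathcal{MD}^{\beta}(G)}\big|\Phi^{-1}(\tau)\big|\prod_{e\in\tau_D}x_e\prod_{v\in\tau_M}y_v=2^{\#\{k:\beta_k=0\}}\prod_{k:\beta_k=1}N_k\;Z_{\mathcal{MD}}^{\beta}(G),
\]
which is the asserted identity.
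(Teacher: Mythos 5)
Your proposal is correct and follows essentially the same route as the paper: a weight-preserving correspondence between $\mathcal{D}(G_{\beta})$ and $\mathcal{MD}^{\beta}(G)$ with constant fibre size, reduced to a local count of matching completions inside each shuriken $S_k$. The only (cosmetic) difference is that you count these completions by a transfer/monodromy argument around the rim, whereas the paper deletes the forced-unused edges and observes that what remains is an even circuit (two matchings) or, after choosing the centre's partner when $\beta_k=1$, an even path (one matching).
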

 \begin{proof}
 As mentioned before, we are interested in the case where $\sum_{k=1}^b \beta_k$ is even. Our aim is to show that each MD covering $\tau \in \mathcal{MD}^{\beta}(G)$ corresponds to $2^{\#\{k:\beta_k=0\}}\prod_{k:\beta_k=1}N_k$ dimer coverings $\tau_{\beta}$ of $G_{\beta}$ whose weights are equal to the weight of $\tau$. The idea is to keep the dimer part $\tau_{\mathcal{D}}$ of $\tau$ and match its monomer part $\tau_{\mathcal{M}}$ to vertices of $G_{\beta}\setminus G$ suitably. For the latter, it is clear that monomers of $\tau$ which lies on a boundary circuit $B_k$ of $G$ must be matched to inner vertices of the shuriken graph $S_k$. Therefore, we only need to prove that to each boundary circuit $B_k$ of $G$, there are precisely 2 possibilities to match monomers of $\tau$ if $\beta_k=0$, while there are precisely $N_k$ possibilities if $\beta_k=1$. In the former case, observe that every inner edge which is opposite to a monomer of $\tau$ can be ignored. Now if we remove from $S_k$ all such inner edges as well as all blade edges not adjacent to monomers of $\tau$, we are left with a circuit of even length, containing monomers of $\tau$. This circuit gives exactly 2 matchings covering these monomers and its remaining vertices. In the second case when $\beta_k=1$, we see that there are exactly $N_k$ possibilities to match the centre vertex of the shuriken graph $S_k$. Hence we only need to show that each such matching can be extended uniquely to a matching that covers monomers of $\tau$ lying on $B_k$ and remaining vertices of $S_k$. Similarly to the previous case, if we remove from $S_k$ its centre vertex and the inner vertex matched to it as well as all edges adjacent to these two, and also remove all inner edges that are opposite to monomers of $\tau$ together with  all blade edges not adjacent to monomers of $\tau$, we are left with a path of even length. This path, containing monomers of $\tau$, gives an unique matching covering them. Additionally, by definition the weight of $\tau$ and the weight of all the dimer coverings $\tau_{\beta}$'s constructed above are equal.

Finally to conclude our proof, one needs to show that the collection $\{\{\tau_{\beta}\}:{\tau \in \mathcal{MD}^{\beta}(G)}\}$ is a partition of $\mathcal{D}(G_{\beta})$. Indeed by the construction above if $\tau\neq \tau'$, then $\tau_{\mathcal{D}}\neq \tau'_{\mathcal{D}}$ and so $\{\tau_{\beta}\}$ and $\{\tau'_{\beta}\}$ are disjoint. Furthermore, if $D$ is a dimer covering of $G_{\beta}$, then $D$ covers an even (resp. odd) number of vertices of $B_k$ with $\beta_k=0$ (resp. $\beta_k=1$) for every $1\leq k\leq b$. Denote by $\tau$ the MD covering whose dimer part $\tau_{\mathcal{D}}$ coincides with $D$ restricted on $G$ and the monomer part $\tau_{\mathcal{M}}$ consists of all vertices on $\partial G$ which are matched by $D$ with some inner vertices of shuriken graphs. Then it is obvious that $\tau$ belongs to $\mathcal{MD}^{\beta}(G)$. Moreover, since $D$ matches the monomers of $\tau$ with some inner vertices of shuriken graphs, by the argument at the beginning of the proof, $D$ must coincide with one of the $\tau_{\beta}$'s.
\end{proof}
Next we will describe some specific orientations on $G_{\beta}$. For an orientation $L\in\mathcal{O}_{\beta}$, let us define $\overline{L}$ as the orientation on $G_{\beta}$ which coincides with $L$ on $G$ and is determined on the remaining edges of $G_{\beta}$ as follows. Recalling the labelling of vertices of $G_{\beta}$ mentioned before, if $\beta_k=1$, on blade edges of $S_k$ we require $\overline{L}$ to go out from $j$ if $j$ is even, and go toward $j$ if $j$ is odd for all $1\leq j\leq N_k$. In addition, on inner edges, we require $\overline{L}$ to be from big-labelled vertices to small-labelled ones except that it goes from $N_k+1$ to $2N_k$, whilst it goes from the centre vertex of $S_k$ to odd-labelled inner vertices, and toward this centre vertex from even-labelled ones. If $\beta_k=0$, we require $\overline{L}$ on blade edges to be the same as in the previous case, except that we invert the orientation on the edge between 1 and $2N_k$. On inner edges, we would like $\overline{L}$ to be from big-labelled vertices to small-labelled ones. This is illustrated in Figure \ref{fig: shuriken added}. It is straightforward to verify that the orientation $\overline{L}$ defined by this way is Kasteleyn on $G_{\beta}\subset \overline{\Sigma}$ for every $L\in\mathcal{O}_{\beta}$.

Recall that we label vertices of $G$ in a specific way (cf. the argument after Definition \ref{def: modified}), that induces a vertex labelling of $G_{\beta}$ (described after Lemma \ref{lem: exist}). With these vertex labels we have the following result.
\begin{proposition} \label{pro: matrix}
  For $L\in\mathcal{O}_{\beta}$, let $A^{\overline{L}}(G_{\beta})$ be the adjacency matrix of $G_{\beta}$ with respect to the orientation $\overline{L}$. We have $$\text{Pf}(A^{\overline{L}}(G_{\beta}))=(-1)^{\sum_{k=1}^b N_k/2}2^{\#\{k:\beta_k=0\}}\prod_{k:\beta_k=1}N_k\;\text{Pf}(M_{\beta}^{L}(G)).$$
 \end{proposition}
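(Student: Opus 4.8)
The plan is to evaluate $\text{Pf}(A^{\overline{L}}(G_{\beta}))$ by eliminating the inner vertices of the shuriken graphs and identifying what remains with the modified adjacency matrix $M_{\beta}^{L}(G)$ of Definition \ref{def: modified}. First I would order the vertices of $G_{\beta}$ so that the vertices of $G$ come first and the inner vertices of $S_1,\dots,S_b$ come last. Since an inner vertex of $S_k$ is adjacent only to the other inner vertices of $S_k$ and to the vertices of $B_k$, the matrix $A^{\overline{L}}(G_{\beta})$ splits into a block form in which the inner blocks of distinct shurikens do not interact. Consequently it is enough to analyse the contribution of a single shuriken $S_k$ and then take the product over $k$, and the two cases $\beta_k=0$ and $\beta_k=1$ behave differently.

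The main tool is the invariance of the Pfaffian under a congruence $A\mapsto PAP^{T}$ with $\det P=1$, that is, the row-and-column operation recalled after the definition of the Pfaffian. When $\beta_k=0$ the inner vertices of $S_k$ form an even cycle whose skew-adjacency block $D_k$ is invertible, and I would eliminate them by such congruences (equivalently, by passing to the Schur complement of $D_k$). A blade edge incident to a boundary vertex $i$ carries the weight $y_i$, so eliminating the inner vertices transfers its effect onto effective edges between the boundary vertices of $B_k$, each of weight a signed multiple of $y_iy_j$. The two things to check are that the eliminated block has $\text{Pf}(D_k)=\pm 2$ — reflecting the two perfect matchings of the even cycle, and accounting for the factor $2$ in the statement — and that the induced effective edge between $i,j\in B_k$ is exactly $(-1)^{i+j}y_iy_j$, so that the effective matrix on $V(G)$ coincides with $M_{\beta}^{L}(G)$. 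This is where the explicit shuriken and the definition of $\overline{L}$ enter, and it is exactly what makes the factor $2^{\#\{k:\beta_k=0\}}$ agree with the one in Lemma \ref{lem: bijection}.

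The case $\beta_k=1$ is the main obstacle and must be treated separately, since the inner block of $S_k$ now has odd size, cannot be eliminated completely, and has to produce the factor $N_k$. Here I would begin with a Laplace expansion of the Pfaffian along the central vertex of $S_k$: this vertex is joined to each of its $N_k$ rim vertices by an inner edge of weight $1$, so the expansion yields $N_k$ terms, and the orientation $\overline{L}$ is tailored so that these terms all carry the same sign and hence sum to $N_k$ times a common value — this is the origin of $\prod_{k:\beta_k=1}N_k$. In each term the central vertex and one rim vertex have been deleted, and the remaining inner vertices are eliminated as before; their number being odd, exactly one inner vertex survives the elimination and plays the role of the virtual vertex $v_k$ of weight $1$. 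One then checks that this surviving vertex acquires effective edges of weight $(-1)^{i+v_k}y_i$ to the boundary vertices $i\in B_k$, which is precisely the term $(-1)^{i+j}d_{ij}y_iy_j$ of $M_{\beta}^{L}(G)$ read with $y_{v_k}=1$.

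It remains to assemble the prefactors and signs. The scalar $2^{\#\{k:\beta_k=0\}}\prod_{k:\beta_k=1}N_k$ is the product of the per-shuriken contributions found above, so the only remaining issue is to verify that the various signs combine into the single global sign $(-1)^{\sum_{k}N_k/2}$. This sign bookkeeping is the most delicate point: one must check that the signs $\epsilon^{\overline{L}}_{ij}(e)$ prescribed by $\overline{L}$, the signs of the Pfaffians of the eliminated inner blocks, and the sign of the permutation needed to reorder the surviving vertices into the labelling of Definition \ref{def: modified} all combine to give $(-1)^{\sum_{k}N_k/2}$ together with the prescribed signs $(-1)^{i+j}$ on the monomer terms, uniformly in $\beta$ and in $L\in\mathcal{O}_{\beta}$. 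Because $\overline{L}$ is Kasteleyn, these signs do line up; once the single-shuriken computation is carried out with its signs, the decoupling of the shuriken blocks reduces the general identity to the product of these contributions over $k$, which is the claim.
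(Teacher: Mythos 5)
Your overall strategy---decouple the shurikens, eliminate the inner vertices of each $S_k$ by Pfaffian-preserving row--column operations, and identify what remains with $M^{L}_{\beta}(G)$, splitting into the cases $\beta_k=0$ and $\beta_k=1$---is the same as the paper's, which also works circuit by circuit via elementary operations followed by iterated Laplace expansion. For $\beta_k=0$ your Schur-complement packaging is legitimate: the inner block is the skew-adjacency matrix of an even cycle, whose Pfaffian is $2(-1)^{N_k/2}$, and this does account for both the factor $2$ and the sign. But the substance of the statement is the verification that the Schur complement $A^{\partial}+A(A^{\mathrm{in}})^{-1}A^{T}$ is \emph{exactly} $M_k$, i.e.\ that the effective coupling between boundary vertices $i,j$ of $B_k$ is $(-1)^{i+j}y_iy_j$ with no spurious terms; you announce this check but do not perform it.

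The genuine gap is in the $\beta_k=1$ case. Expanding along the central vertex produces $N_k$ sub-Pfaffians, and you assert that $\overline{L}$ is ``tailored so that these terms all carry the same sign and hence sum to $N_k$ times a common value.'' That is precisely the hard point, and it does not follow from the Kasteleyn property as stated: each sub-Pfaffian is a full signed sum over matchings of $G_{\beta}$ minus two vertices, the shuriken orientation is not rotationally symmetric (blade orientations alternate and one inner edge is reversed), and the Laplace signs $(-1)^{i+j+1+\theta(i-j)}$ of Lemma \ref{lem: Laplace} vary with the rim vertex chosen; establishing that the $N_k$ terms coincide is essentially equivalent to the proposition itself. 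The paper circumvents this by first performing row--column operations that reduce the relevant inner row to exactly two nonzero entries, $1$ and $N_k$, so that the Laplace expansion has only two terms, the second of which is shown to vanish---the factor $N_k$ appears as a matrix entry, not as a count of equal terms. Relatedly, after deleting the centre and one rim vertex the remaining odd-size path block is singular, so ``exactly one inner vertex survives'' requires a choice of which vertex to keep, and you must then verify that this survivor acquires couplings $(-1)^{i+1}y_i$ to \emph{every} vertex of $B_k$ and sits in the last position of the ordering, as $v_k$ does in Definition \ref{def: modified}. Until these computations (and the resulting bookkeeping for $(-1)^{\sum_k N_k/2}$) are carried out, the proposal is a plausible plan rather than a proof.
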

 The proof of this proposition will be left until the end of this subsection. We now show how this proposition implies Theorem \ref{theo: main}.
 \begin{proof}[Proof of Theorem \ref{theo: main}] Recall that $\{\alpha_i\}_{1\leq i\leq 2g}$ is a set of closed curves whose homology classes form a basis of $H_1(\overline{\Sigma};\mathbb{Z}_2)$, and that $\gamma_i$ is an oriented closed cycle of $G$ which is disjoint from $\partial G$ and has $\alpha_i$ on its immediate left for every $1\leq i\leq 2g$. Recall also that $K_{\beta}$ is an element of $\mathcal{O}_{\beta}$ satisfying that $n^{K_{\beta}}(\gamma_i)$ is odd for every $i$. 
   By definition, $\overline{K_{\beta}}$ is Kasteleyn on $G_{\beta}$ and still satisfies that $n^{\overline{K_{\beta}}}(C_i)$ is odd.
 Moreover, since $\alpha_i$ is chosen to be disjoint from $\partial G$, it is clear that $\overline{K_{\beta}}^{\epsilon}=\overline{K_{\beta}^{\epsilon}}$. Using this fact and applying Theorem \ref{theo: dimer} for $G_{\beta}$ and $\overline{K_{\beta}}$ we get 
 \begin{eqnarray}
   Z_{\mathcal{D}}(G_{\beta})&=&\frac{1}{2^g}\bigg|\sum_{\epsilon\in \mathbb{Z}_2^{2g}}(-1)^{\sum_{i<j}\epsilon_i\epsilon_j\alpha_i\cdot\alpha_j} \text{Pf}(A^{\overline{K_{\beta}}^{\epsilon}}(G_{\beta}))\bigg| \nonumber\\
   &=&\frac{1}{2^g}\bigg|\sum_{\epsilon\in \mathbb{Z}_2^{2g}}(-1)^{\sum_{i<j}\epsilon_i\epsilon_j\alpha_i\cdot\alpha_j} \text{Pf}(A^{\overline{K_{\beta}^{\epsilon}}}(G_{\beta}))\bigg|.\nonumber
   \end{eqnarray}  
    In the following, the sum over $\beta$ will be understood as over $\beta=(\beta_1,\dots,\beta_b)\in \mathbb{Z}_2^b$ such that $\sum_{k=1}^b\beta_k$ is even. With this convention and by the equality above we can write 
  \begin{eqnarray}
     Z_{\mathcal{MD}}(G)&=&\sum_{\beta}Z_{\mathcal{MD}}^{\beta}(G)\nonumber\\&\overset{Lem. \ref{lem: bijection}}{=}&\sum_{\beta}\bigg(2^{\#\{k:\beta_k=0\}}\prod_{k:\beta_k=1}N_k\bigg)^{-1}Z_{\mathcal{D}}(G_{\beta})\nonumber\\
     &=&\sum_{\beta}\bigg(2^{\#\{k:\beta_k=0\}}\prod_{k:\beta_k=1}N_k\bigg)^{-1}\bigg|\sum_{\epsilon\in \mathbb{Z}_2^{2g}}(-1)^{\sum_{i<j}\epsilon_i\epsilon_j\alpha_i\cdot\alpha_j} \text{Pf}(A^{\overline{K^{\epsilon}_{\beta}}}(G_{\beta}))\bigg| \nonumber\\
     &\overset{Prop. \ref{pro: matrix}}{=}&\frac{1}{2^g} \sum_{\beta}\bigg| \sum_{\epsilon\in\mathbb{Z}_2^{2g}}(-1)^{\sum_{i<j}\epsilon_i\epsilon_j\alpha_i\cdot\alpha_j}\text{Pf}(M^{K_{\beta}^{\epsilon}}_{\beta}(G))\bigg|. \nonumber
   \end{eqnarray}
   This concludes the proof of Theorem \ref{theo: main}.
   \end{proof}
 Now we are only left with the proof of Proposition \ref{pro: matrix}. The idea of the proof is that, for each $K\in \mathcal{O}_{\beta}$, we can transform the matrix $A^{\overline{K}}(G_{\beta})$ to a new matrix using only elementary row-column operations so that their Pfaffians are equal, and then we relate the Pfaffian of the latter to that of $M^K_{\beta}(G)$ using Laplace expansions. Hence before giving the proof of Proposition \ref{pro: matrix}, let us recall the Laplace expansion for Pfaffians, whose proof can be found in \cite[Proposition 2.3]{Ishi06}.
 \begin{lemma}\label{lem: Laplace}
   If $A=(a_{ij})$ is a skew-symmetric matrix of size $2n$, then for any $i=1,\dots,2n$ we have $$\text{Pf}(A)=\sum_{\substack{j=1\\j\neq i}}^{2n}(-1)^{i+j+1+\theta(i-j)}a_{ij}\text{Pf}(A_{\hat{i}\hat{j}}),$$ where $A_{\hat{i}\hat{j}}$ is the matrix obtained from $A$ by removing both $i^{\text{th}}$ and $j^{\text{th}}$ rows and columns, while $\theta$ is the $\emph{Heaviside step function}$, that is, $\theta(l)$ is equal to 1 if $l\geq 0$ and 0 otherwise.
  \end{lemma}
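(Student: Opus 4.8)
The plan is to derive the expansion directly from the definition of the Pfaffian after rewriting it as a sum over perfect matchings. First I would recall that each perfect matching of $\{1,\dots,2n\}$ arises from exactly $2^n n!$ of the permutations $\sigma$ appearing in the defining sum (reorder the $n$ pairs, and flip the two entries of each pair), and that, because $A$ is skew-symmetric, all of these permutations contribute the same signed monomial. Hence the prefactor $1/(2^n n!)$ collapses the defining sum to one term per matching, giving
\[
\text{Pf}(A)=\sum_{\pi}\text{sgn}(\pi)\prod_{\{a,b\}\in\pi,\ a<b}a_{ab},
\]
where $\pi$ runs over the perfect matchings of $\{1,\dots,2n\}$ and $\text{sgn}(\pi)$ is the signature of the permutation that lists the pairs of $\pi$, each with its smaller element first and the pairs sorted by their smaller element.

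With this reformulation the argument becomes combinatorial. Fix the index $i$. Every matching $\pi$ pairs $i$ with a unique $j\neq i$, so I would partition all matchings according to this $j$. Deleting the pair $\{i,j\}$ sets up a bijection between the matchings of $\{1,\dots,2n\}$ that pair $i$ with $j$ and the matchings $\pi'$ of the reduced set $\{1,\dots,2n\}\setminus\{i,j\}$, under which the monomial factorises as $a_{ij}\prod_{\{a,b\}\in\pi'}a_{ab}$. Summing over $\pi'$ therefore reproduces $a_{ij}\,\text{Pf}(A_{\hat i\hat j})$, and collecting the contributions over all $j\neq i$ yields the stated formula once the prefactor sign is identified. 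To pin that sign it is cleanest to invoke the standard equivariance $\text{Pf}(PAP^{\top})=\det(P)\,\text{Pf}(A)$: reorder the ground set by a permutation $P$ that brings $i$ and $j$ to the first two positions while preserving the relative order of all other indices; in the reordered matrix the pair $\{i,j\}$ occupies positions $1,2$, whose contribution to the Pfaffian is $+a_{ij}$ times the Pfaffian of the submatrix on positions $3,\dots,2n$, namely $A_{\hat i\hat j}$ with a $+$ sign. The residual sign is thus exactly $\det(P)=\text{sgn}(P)$.

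The hard part will be the sign bookkeeping, i.e.\ verifying that $\text{sgn}(P)=(-1)^{i+j+1+\theta(i-j)}$, and this is really the only delicate point. I would compute it by counting adjacent transpositions: moving $i$ to the front costs $i-1$ transpositions, after which $j$ occupies position $j$ when $i<j$ and position $j+1$ when $i>j$, so bringing it to the second position costs a further $j-2$ or $j-1$ transpositions respectively. The total parity is $(-1)^{i+j+1}$ for $i<j$ and $(-1)^{i+j}$ for $i>j$, which is precisely $(-1)^{i+j+1+\theta(i-j)}$ once one recalls that $\theta(i-j)$ equals $0$ for $i<j$ and $1$ for $i>j$. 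A direct check on the $2\times2$ and $4\times4$ cases, where the Pfaffians are $a_{12}$ and $a_{12}a_{34}-a_{13}a_{24}+a_{14}a_{23}$, confirms both the $\theta$-correction and its consistency with the skew-symmetry $a_{ij}=-a_{ji}$ (expanding along row $2$ reproduces the same three terms). Everything else is routine, and I expect no obstacle beyond this careful transposition count.
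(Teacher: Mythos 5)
Your proof is correct, but note that the paper does not actually prove this lemma: it is quoted directly from the literature (Ishikawa--Wakayama, cited as \cite[Proposition 2.3]{Ishi06}), so your self-contained derivation replaces a citation rather than paralleling an argument in the text. Your route is the standard one and it checks out. The collapse of the $\frac{1}{2^n n!}$-normalised sum to a sum over perfect matchings is sound (swapping the two entries of a pair flips both $\mathrm{sign}(\sigma)$ and the corresponding entry of the skew-symmetric matrix, while permuting two pairs is an even permutation, so all $2^n n!$ permutations lying over one matching contribute the same signed monomial); partitioning matchings by the partner $j$ of the fixed index $i$ is the right decomposition; and the delicate sign count is right: $(i-1)+(j-2)$ adjacent transpositions for $i<j$ and $(i-1)+(j-1)$ for $i>j$ give $(-1)^{i+j+1}$ and $(-1)^{i+j}$ respectively, which is $(-1)^{i+j+1+\theta(i-j)}$ since $\theta(i-j)=0$ for $i<j$ and $\theta(i-j)=1$ for $i>j$ (the convention $\theta(l)=1$ for $l\geq 0$ causes no ambiguity because $i\neq j$). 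One step deserves tightening: the equivariance $\mathrm{Pf}(PAP^{\top})=\det(P)\,\mathrm{Pf}(A)$ is an identity between \emph{full} sums, whereas you apply it to the single slice of matchings pairing $i$ with $j$. Either note explicitly that the relabelling bijection on matchings multiplies every signed monomial uniformly by $\det(P)$ (which is in effect how the equivariance identity is proved, so this is implicit in your setup), or, for a cleaner patch, apply the global identity to the matrix obtained from $A$ by zeroing every entry of row and column $i$ except $a_{ij}$ and $a_{ji}$: its Pfaffian is exactly the slice in question, and after conjugating by your $P$ it expands trivially along the first row. With that one-line remark your argument is complete, and it has the advantage over the paper's citation of making the article self-contained at the cost of half a page.
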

 \begin{proof}[Proof of Proposition \ref{pro: matrix}]
   As mentioned above, for an orientation $K\in \mathcal{O}_{\beta}$, we will first transform the matrix $A^{\overline{K}}(G_{\beta})$ to a new matrix using row-column operations, and then use Laplace expansions to relate the Pfaffian of the latter to that of $M^K_{\beta}(G)$. The point is that all these operations and expansions, as we will see, are local and only depend on boundary circuits individually. Therefore, without loss of generality, in the following we will only work with the matrices $A^{\overline{K}}(G_{\beta})$ and $M^K_{\beta}(G)$ partially, that is, we will work with their submatrices corresponding to each of boundary circuits. However readers should keep in mind that what we will show now works completely well in the global context. Throughout the rest of this proof, let us fix a $\beta\in \mathbb{Z}_2^b$ with $\sum_{k=1}^b\beta_k$ even, and an orientation $K\in \mathcal{O}_{\beta}$.
   
Let us begin with a boundary circuit $B_k$ corresponding to $\beta_k=0$. Recalling the labelling of $G_{\beta}$ on this boundary circuit as well as the orientation $\overline{K}$, we can write the submatrix corresponding to $B_k$ of $A^{\overline{K}}(G_{\beta})$ as a block matrix $$A_k=\left(\begin{array}{cc} A^{\partial} & A\\ -A^T&A^{\text{in}}
 \end{array}\right).$$ Here $A^{\partial}$ represents the adjacencies between vertices on $B_k$, $A^{\text{in}}$ represents the adjacencies between inner vertices of the shuriken graph $S_k$, while $A$ represents adjacencies between vertices of these two types. Note that all these matrices are of size $N_k$. More precisely, denoting vertex weights of vertices on $B_k$ by $y_1,\dots,y_{N_k}$ we can write $A=(a_{ij})_{1\leq i,j\leq N_k}$ with all of entries equal to 0 except $a_{11}=-y_1=-a_{1,N_k}$ and $a_{i,i-1}=a_{ii}=(-1)^{i}y_i$ for $2\leq i\leq N_k$. Also we have $A^{\text{in}}=(b_{ij})_{1\leq i,j\leq N_k}$ with $b_{1,N_k}=-b_{N_k,1}=-1$, $b_{i,i+1}=-b_{i+1,i}=-1$ for $1\leq i\leq N_k-1$ and other entries equal to 0. Denoting by $M_k$ the submatrix of $M^K_{\beta}(G)$ corresponding to $B_k$ and writing $N_k=2n_k$ for further purposes, our aim now is to prove that 
 \begin{equation}\label{eq: 0}
   \text{Pf}(A_k)=2(-1)^{n_k}\text{Pf}(M_k).
 \end{equation}
To do so, we will first transform the matrix $A_k$ using row and column operations so that its Pfaffian does not change. Let us denote by $R_{l}(X)$ and $C_{l}(X)$ the $l^{\text{th}}$ row and the $l^{\text{th}}$ column of a matrix $X$.  Add $(-1)^ly_lR_{N_k+1}(A_k)$ to $R_l(A_k)$ as well as $(-1)^ly_lC_{N_k+1}(A_k)$ to $C_l(A_k)$ for each $2\leq l\leq N_k$ step by step. Since $b_{ij}=0$ if both $i,j$ are odd, one can verify easily that after each step all the odd columns of $A$ (as well as odd rows of $-A^T$) do not change. By the same reason, for each $1\leq m\leq n_k-1$ we can add $(-1)^{l}y_lR_{N_k+2m+1}(A_k)$ to $R_l(A_k)$ (and similarly for corresponding columns) for every $2m+2\leq l\leq N_k$. Note that these operations keep $A^{\text{in}}$ unchanged. Moreover, by the definition of $M^K_{\beta}(G)$, it is straightforward to check that after all these operations above, the matrix $A_k$ becomes $$A'_k=\left(\begin{array}{cc} M_k & A'\\ -A'^T&A^{\text{in}}
 \end{array}\right)$$ whose Pfaffian is equal to the Pfaffian of $A_k$. Let us now determine the entries $a'_{ij}$ of $A'$. By the argument above, all the odd columns of $A'$ are equal to those of $A$. We will show that all the even columns of $A'$ are 0 except the last one. Indeed, for $1\leq j\leq n_k-1$ we have $$a'_{2j,2j}=a_{2j,2j}+y_{2j}(b_{1,2j}+b_{3,2j}+\cdots+b_{2j-3,2j}+b_{2j-1,2j})=0$$ since $a_{2j,2j}=y_{2j}$, $b_{2j-1,2j}=-1$ while $b_{1,2j}=\cdots=b_{2j-1,2j}=0$. Similarly we have $$a'_{2j+1,2j}=a_{2j+1,2j}-y_{2j+1}(b_{1,2j}+b_{3,2j}+\cdots+b_{2j-3,2j}+b_{2j-1,2j})=0$$ as $a_{2j+1,2j}=-y_{2j+1}$. If $l<2j$ then we have $$a'_{l,2j}=a_{l,2j}+(-1)^ly_{l}(b_{1,2j}+b_{3,2j}+\cdots+b_{p,2j})$$ with some $p$ odd and $p\leq 2j-3$. Since in this case $a_{l,2j}=0$, we get $a'_{l,2j}=0$. If $l>2j+1$, we can write $$a'_{l,2j}=a_{l,2j}+(-1)^ly_{l}(b_{1,2j}+\cdots+b_{2j-1,2j}+b_{2j+1,2j}+\cdots+b_{q,2j})$$ with some $q$ odd and $q\geq 2j+3$. As in this case we also have $a_{l,2j}=0$, $b_{2j-1,2j}=-b_{2j+1,2j}=-1$ while others $b_{i,2j}$'s are 0, we obtain $a'_{l,2j}=0$ as well. Finally let us look at the last column of $A'$. Since all the entries of this column are determined only by the last entries of $R_{N_k+1}(A_k)$ and $R_{2N_k-1}(A_k)$ together with the last column of $A$, we simply get $C_{N_k}(A')=(y_1,-y_2,\dots,y_{N_k-1},-y_{N_k})^T$.\\
 Now let us make one more transformation. We add $\sum_{m=1}^{n_k}R_{N_k+2m-1}(A_k')$ to $R_{2N_k}(A'_k)$ as well as $\sum_{m=1}^{n_k}C_{N_k+2m-1}(A_k')$ to $C_{2N_k}(A'_k)$ so that we get $$A''_k=\left(\begin{array}{cc} M_k & A''\\ -A''^T&A^{\text{in}}
 \end{array}\right).$$ Note that now the block $A''$ of $A''_k$ has all even columns equal to 0, and we still have $\text{Pf}(A''_k)=\text{Pf}(A'_k)=\text{Pf}(A_k)$. Applying Lemma \ref{lem: Laplace} for the last row of $A''_k$ in which there are only 2 entries different from 0, we get $$\text{Pf}(A''_k)=-2\text{Pf}\left(\begin{array}{cc} M_k & D\\ -D^T& E \end{array}\right).$$ Here $E$ is the matrix obtained from $A_{\text{in}}$ by removing its last two rows and last two columns, while $D$ is a matrix whose even columns consist of 0's. Repeating this expansion for the new block matrix with a remark that now its last row contains only one nonzero element, and using the fact that the top right block has all even columns equal to 0, by recursion we get $$\text{Pf}(A_k)=\text{Pf}(A''_k)=2(-1)^{n_k}\text{Pf}(M_k).$$ This concludes the proof of Equation (\ref{eq: 0}).
 
Next we consider a boundary circuit $B_k$ corresponding to $\beta_k=1$. In this case the corresponding submatrix of $A^{\overline{K}}(G_{\beta})$ to this circuit is given by $$A_k^1=\left(\begin{array}{cc} A^{\partial}_1 & A_1\\ -(A_1)^T&A^{\text{in}}_1
 \end{array}\right)$$ Due to the differences of $S_k$ and $\overline{K}$ in this case, the block $A_1$ is obtained from $A$ as follows: we change the sign of $a_{1,N_k}$, and then add a column of all 0's to the right of $A$. Similarly, the matrix $A_1^{\text{in}}$ is obtained from $A^{\text{in}}$ by changing the signs of $b_{1,N_k}$ and $b_{N_k,1}$, and by adding the column $(-1,1,\dots,-1,1,0)^T$ to its rights and the row $(1,-1,\dots,1,-1,0)$ to its bottom. Our purpose now is to prove that \begin{equation}\label{eq: 1}
 \text{Pf}(A^1_k)=(-1)^{n_k}N_k\text{Pf}(M_k^1).
 \end{equation}
For each $0\leq m\leq n_k-1$ let us add $(-1)^{l}y_lR_{N_k+2m+1}(A^1_k)$ to $R_l(A^1_k)$ and $(-1)^{l}y_lC_{N_k+2m+1}(A^1_k)$ to $C_l(A^1_k)$ for every $2m+2\leq l\leq N_k$ as in the previous case. By the same argument we obtain 
$$(A_k^1)'=\left(\begin{array}{cc} M_k & A_1'\\ -(A_1')^T&A_1^{\text{in}}
 \end{array}\right)$$ 
 so that $\text{Pf}(A_k^1)'=\text{Pf}(A_k^1)$. However note that here $M_k$ is only the submatrix of $M_{\beta}^K(G)$ corresponding to the circuit $B_k$ without adding the vertex $v_k$. More precisely, $M_k^1$ is obtained from $M_k$ by adding the column $(y_1,-y_2,\dots,y_{N_k-1},-y_{N_k})^T$ to its right and the row $(-y_1,y_2,\dots,-y_{N_k-1},y_{N_k})$ to its bottom. Also $A_1'$ and $A'$ coincide on the first $N_k-1$ columns, while their $N_k^{\text{th}}$ columns are of opposite signs. Moreover, following our operations above and by some simple calculations, we can find that the last column of $A_1'$ is $$(0, -y_2,y_3,\dots, -iy_{2i},iy_{2i+1},\dots, -(n_k-1)y_{N_k-2}, (n_k-1)y_{N_k-1},-n_ky_{N_k})^T.$$ Now let us do some more transformations. First of all, similarly to the previous case we add $-\sum_{m=1}^{n_k}R_{N_k+2m-1}(A^1_k)'$ to $R_{N_k+2n_k}(A_k^1)'$ (and do similarly for corresponding columns) to get $$(A_k^1)''=\left(\begin{array}{cc} M_k & A_1''\\ -(A_1'')^T&(A_1^{\text{in}})'
 \end{array}\right)$$ so that $\text{Pf}(A_k^1)''=\text{Pf}(A_k^1)'$ and $A_1''$ has all even columns equal to 0. Also, $(A_1^{\text{in}})'$ can be obtained from $A^{\text{in}}$ by adding the column $(-1,1,\dots,-1,n_k+1,0)^T$ to its right and the row $(1,-1,\dots,1,-n_k-1,0)$ to its bottom. Secondly we add $\sum_{m=1}^{n_k}mR_{N_k+2m-1}(A^1_k)''$ to $R_{2N_k+1}(A^1_k)''$ (and do similarly for corresponding columns) to obtain $$(A_k^1)'''=\left(\begin{array}{cc} M_k & A_1'''\\ -(A_1''')^T&(A_1^{\text{in}})''
 \end{array}\right)$$ so that $\text{Pf}(A_k^1)'''=\text{Pf}(A_k^1)''$, $A_1'''$ still has all even columns equal to 0 while $(A_1^{\text{in}})''$ is obtained  by adding the columns $(-1,0,-1,0,\dots,-1,N_k,0)^T$ to the right of $A^{\text{in}}$ and adding the row $(1,0,1,0,\dots,1,-N_k,0)$ to its bottom. Finally let us add $\sum_{m=2}^{n_k}R_{N_k+2m-1}(A_k^1)'''$ to $R_{N_k+1}(A_k^1)'''$ (and similarly for columns) to get $$A_{\text{final}}=\left(\begin{array}{cc} M_k & P\\ -P^T&Q
 \end{array}\right)$$ so that $\text{Pf}(A_{\text{final}})=\text{Pf}(A_k^1)'''$ and the matrix $P$ has all even columns equal to 0, while its first column is $(-y_1,y_2,\dots,-y_{N_k-1},y_{N_k})^T$. Also one gets $$Q = 
 \begin{pmatrix}
  0& 0 & 0&0& \cdots &0&0&* \\
  0& 0 & -1&0& \cdots &0&0&0 \\
  0& 1 & 0&-1& \cdots &0&0&* \\
  0& 0 & 1&0& \cdots &0&0&0 \\
  \vdots  & \vdots  &\vdots &\vdots& \ddots & \vdots&\vdots&\vdots  \\
  0 &0 &0&0& \cdots & 0&-1&*\\
   0 &0 &0&0& \cdots & 1&0&N_k\\
   * &0 &*&0& \cdots & *&-N_k&0
 \end{pmatrix}$$ is a skew-symmetric matrix of size $N_k+1$ (where $*$ indicates non-important entries). Now applying Lemma \ref{lem: Laplace} for the $(2N_k)^{\text{th}}$ row of $A_{\text{final}}$ in which there are exactly two entries different from 0, namely 1 and $N_k$, we get \begin{equation}\label{eq: final}
 \text{Pf}(A_{\text{final}})=N_k\text{Pf} \begin{pmatrix} M_k & U\\ -U^T& X_{N_k-1}
 \end{pmatrix}-\text{Pf} \begin{pmatrix} M_k & V\\ -V^T& Y_{N_k-1}
 \end{pmatrix}.
 \end{equation}
  In this equation, $U,V$ are matrices with all even columns equal to 0 while $X_m$ is the matrix of size $m$ of the following type  $$\begin{pmatrix}
  0& 0 & 0&0& \cdots &0&0\\
  0& 0 & -1&0& \cdots &0&0 \\
  0& 1 & 0&-1& \cdots &0&0 \\
  0& 0 & 1&0& \cdots &0&0 \\
  \vdots  & \vdots  &\vdots &\vdots& \ddots & \vdots& \vdots  \\
 0 &0&0&0& \cdots &0&-1\\
 0 &0&0&0& \cdots &1&0
 \end{pmatrix}.$$ Additionally $Y_m$ is obtained by adding the column of type $(*,0,\dots,*,0,0)^T$ to the right of $X_{m-1}$ and the row of same type to its bottom. To compute the first Pfaffian on the right hand side of (\ref{eq: final}), one can apply Lemma \ref{lem: Laplace} for the row corresponding to the second row of $X_{N_k-1}$ in which the only nonzero element is -1. Note that by removing rows and columns corresponding to this entry, we obtain a new matrix of the same type as before with size decreased by 2. By induction, and by changing the sign of the last column (and the last row) of the final matrix, we get the first term on the right hand side of (\ref{eq: final}) exactly equal to $N_k(-1)^{n_k}\text{Pf}(M_k^1)$. By the same argument one can compute the second term of (\ref{eq: final}) equal to 0. This leads to $$\text{Pf}(A^1_k)=\text{Pf}(A^1_k)'=\text{Pf}(A^1_k)''=\text{Pf}(A^1_k)'''=\text{Pf}(A_{\text{final}})=N_k(-1)^{n_k}\text{Pf}(M^1_k)$$ which proves Equation (\ref{eq: 1}).
 
Finally, to conclude our proof, one only needs to combine Equation (\ref{eq: 0}) and (\ref{eq: 1}) together, and remark that all the operations and expansions we show above work completely well in the global context. This concludes the proof of Proposition \ref{pro: matrix}, as well as of Theorem \ref{theo: main}.
\end{proof}
 \bibliographystyle{plain}
\bibliography{reference}
 \end{document}